\newtheorem{proposition}{Proposition}
\begin{document}

\title{Closed-Form Global Optimization of Beyond Diagonal Reconfigurable Intelligent Surfaces}

\author{Matteo Nerini,~\IEEEmembership{Graduate Student Member,~IEEE},
        Shanpu Shen,~\IEEEmembership{Senior Member,~IEEE},\\
        Bruno Clerckx,~\IEEEmembership{Fellow,~IEEE}

\thanks{This work was supported by Hong Kong Research Grants Council through the Collaborative Research Fund under Grant C6012-20G. (\textit{Corresponding author: Shanpu Shen}.)}
\thanks{M. Nerini is with the Department of Electrical and Electronic Engineering, Imperial College London, London, SW7 2AZ, U.K. (e-mail: m.nerini20@imperial.ac.uk).}
\thanks{S. Shen is with the Department of Electronic and Computer Engineering, The Hong Kong University of Science and Technology, Clear Water Bay, Kowloon, Hong Kong (e-mail: sshenaa@connect.ust.hk).}
\thanks{B. Clerckx is with the Department of Electrical and Electronic Engineering, Imperial College London, London SW7 2AZ, U.K. and with Silicon Austria Labs (SAL), Graz A-8010, Austria (e-mail: b.clerckx@imperial.ac.uk).}}

\maketitle

\begin{abstract}
Reconfigurable intelligent surfaces (RISs) allow controlling the propagation environment in wireless networks by tuning multiple reflecting elements.
RISs have been traditionally realized through single connected architectures, mathematically characterized by a diagonal scattering matrix.
Recently, beyond diagonal RISs (BD-RISs) have been proposed as a novel branch of RISs whose scattering matrix is not limited to be diagonal, which creates new benefits and opportunities for RISs.
Efficient BD-RIS architectures have been realized based on group and fully connected reconfigurable impedance networks.
However, a closed-form solution for the global optimal scattering matrix of these architectures is not yet available.
In this paper, we provide such a closed-form solution proving that the theoretical performance upper bounds can be exactly achieved for any channel realization.
We first consider the received signal power maximization in single-user \gls{siso} systems aided by a BD-RIS working in reflective or transmissive mode.
Then, we extend our solution to single-user \gls{mimo} and multi-user \gls{miso} systems.
We show that our algorithm is less complex than the iterative optimization algorithms employed in the previous literature.
The complexity of our algorithm grows linearly (resp. cubically) with the number of RIS elements in the case of group (resp. fully) connected architectures.
\end{abstract}

\glsresetall

\begin{IEEEkeywords}
Beyond diagonal reconfigurable intelligent surface (BD-RIS), closed-form global optimization, fully connected, group connected.
\end{IEEEkeywords}

\section{Introduction}

Reconfigurable intelligent surfaces (RISs) are an emerging technology that will enhance the performance of future wireless communications \cite{wu19a,bas19,hua19,wu21,dir20}.
This technology relies on large planar surfaces comprising multiple reflecting elements, each of them capable of inducing a certain amplitude and phase change to the incident electromagnetic wave.
Thus, an RIS can steer the reflected signal toward the intended direction by smartly coordinating the reflection coefficients of its elements.
RIS-aided communication systems benefit from several advantages.
RISs with passive elements are characterized by ultra-low power consumption and do not cause any active additive thermal noise or self-interference phenomena.
Furthermore, RIS is a low-profile and cost-effective solution since it does not include expensive \gls{rf} chains.
In conventional RISs, denoted as single connected RISs, each element is controlled by a tunable impedance connected to ground \cite{she20}.
As a result, conventional RISs are characterized by a diagonal scattering matrix, also known as phase shift matrix.

Conventional RISs have been optimized with several objectives, such as transmit power minimization \cite{wu19b}, weighted sum-power minimization \cite{liu21}, and weighted sum-rate maximization \cite{guo20}.
In \cite{li21}, RISs have been designed to optimally support wide-band communications.
Recently, RISs have been also applied to improve the efficiency of \gls{wpt} \cite{fen22} and \gls{swipt} systems \cite{zha22a}.
Multi-RIS aided systems have been studied in \cite{li20,zhe21,mei22}, where the inter-RIS signal reflections are exploited to fully unveil the potential of this technology.
Path-loss models for RISs considering both near-field and far-field propagation have been developed in \cite{ozd20,tan21}.
Since continuous phase shifts are hard to realize in practice, RISs have been designed based on discrete phase shifts \cite{wu19c,di20}.
In \cite{zhe20,wei21,guo22}, the authors addressed the problem of low-overhead channel estimation in RIS-aided systems.
In \cite{abe20,cai20}, practical reflection models capturing the phase-dependent amplitude variation in the reflection coefficients have been developed.
Finally, prototypes of discrete phase shift RISs have been designed in \cite{dai20,dun20}.

Differently from conventional RISs, beyond diagonal RISs (BD-RISs) have been proposed as a novel branch of RISs in which the scattering matrix is not limited to be diagonal \cite{li23-1}.
Several BD-RIS architectures have been introduced, as shown in the classification tree in Fig.~\ref{fig:ris-tree}.
In \cite{she20}, the authors generalized the single connected architecture by connecting all or a subset of RIS elements through a reconfigurable impedance network, resulting in the fully and group connected architecture, respectively.
Group and fully connected RISs have been designed with discrete reflection coefficients in \cite{ner21}.
In \cite{xu21}, the concept of \gls{star-ris}, or \gls{ios}, has been introduced.
This BD-RIS architecture is able to reflect and transmit the impinging signal, differently from conventional RISs working only in reflective mode \cite{xu22,zha20,zha22b}.
In \cite{li22-1}, a general RIS model has been proposed to unify different modes (reflective/transmissive/hybrid) and different architectures (single/group/fully connected).
The authors also propose the novel cell-wise group/fully connected BD-RIS architecture.
In \cite{li22-2}, multi-sector BD-RISs have been proposed to achieve full-space coverage.
The synergy between multi-sector BD-RISs and \gls{rsma} proved to improve the performance, coverage, and save on antennas in multi-user systems \cite{li23-2}.
In \cite{li22-3}, dynamically group connected RISs are optimized based on a dynamic grouping strategy.
In \cite{li22}, a BD-RIS architecture with a non-diagonal phase shift matrix is proposed, able to achieve a higher rate than conventional RISs.
Several benefits of BD-RISs over conventional RISs can be identified.
Since BD-RISs can adjust not only the phases but also the magnitudes of the impinging waves, the received signal power is consequently improved \cite{she20}.
In group connected RISs, the grouping strategy can be properly optimized to further increase the received signal power \cite{ner21,li22-3}.
When discrete reflection coefficients are considered, BD-RISs achieve the performance upper bound with fewer resolution bits than conventional RISs \cite{ner21}.
Finally, BD-RISs enable efficient hybrid transmissive and reflective mode \cite{li22-1}, and highly directional full-space coverage \cite{li22-2}.

\begin{figure}[t]
\centering
\includegraphics[width=0.48\textwidth]{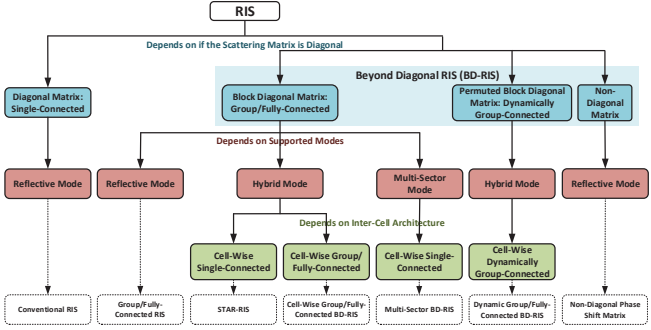}
\caption{RIS classification tree.}
\label{fig:ris-tree}
\end{figure}

The fully connected architecture enables the best performance gain with respect to all other RIS models proposed to date \cite{she20}.
This is due to the additional degrees of freedom provided by the complex architecture.
Besides, the group connected architecture has been proposed to achieve a good trade-off between performance enhancement and complexity.
Depending on the group size, this architecture bridges between the single and the fully connected ones.
However, a closed-form solution for the global optimal scattering matrix of group and fully connected architectures is not yet available.
The scattering matrix has been optimized in recent literature by employing costly iterative optimization algorithms \cite{she20,ner21}.
For this reason, it was possible to show that the theoretical performance upper bounds are tight only numerically.

In this paper, we provide a closed-form global optimal solution for the scattering matrix of group and fully connected RISs, driven by the success of these novel BD-RIS architectures.
The resulting scattering matrix is proved to exactly achieve the received signal power upper bounds derived in \cite{she20} for \gls{siso} systems.
Thus, we mathematically prove that these upper bounds are tight.
Furthermore, we show that our algorithm is less complex than the iterative optimization methods applied to design the scattering matrix in the recent literature \cite{she20}.
The complexity of our algorithm grows linearly with the number of RIS elements in the case of group connected architectures, while it grows cubically in fully connected architectures.
Given the non-convexity of the involved optimization problems, this is the first study deriving a closed-form global optimal solution for BD-RISs.
Our solution for single-user \gls{siso} systems is also proven to be general enough to allow the optimization of multiple problems in RIS-aided multi-antenna systems.
The contributions of this paper are summarized as follows:

\textit{First}, as the main contribution, we provide a low-complexity closed-form global optimal solution for the scattering matrix of BD-RISs working in reflective mode applied to single-user \gls{siso} systems.
In these systems, group and fully connected RISs designed with our solution exactly achieve their performance upper bounds.
The upper bound-achieving property of our solution is valid for any channel realization, with no assumptions on its distribution and correlation.

\textit{Second}, we consider BD-RISs working in transmissive mode, enabled by the cell-wise group connected architecture proposed in \cite{li22-1}.
We show how our optimal solution can be exploited to also globally optimize these BD-RISs.
Also in the case of transmissive mode, the performance upper bounds are always exactly achieved by BD-RISs optimized through our solution.

\textit{Third}, we exploit our optimal solution to optimize the RIS scattering matrix in single-user \gls{mimo} systems, including \gls{miso} systems as a special case.
For systems aided by a fully connected RIS and with negligible direct link, we derive a tight upper bound on the received signal power.
We show that such an upper bound can be always exactly achieved with our optimal strategy.
In addition, we also propose an efficient sub-optimal solution for the case in which the direct link is not negligible.
In this case, a tight upper bound on the received signal power is not known.

\textit{Fourth}, we study the weighted sum power maximization problem in multi-user \gls{miso} systems.
In the case of systems aided by a fully connected RIS and with negligible direct links, we provide a tight performance upper bound and an optimal solution to achieve it.
Also for multi-user \gls{miso} systems, we provide a sub-optimal solution to design the RIS in the case the direct links are not negligible.
In fact, a tight performance upper bound is not available in this case.

\textit{Organization}: In Section~\ref{sec:system-model}, we define the system model and the problem formulation.
In Section~\ref{sec:design}, we derive the upper bound-achieving closed-form solution for the scattering matrix in single-user \gls{siso} systems.
In Section~\ref{sec:design-transmissive}, we show that our solution can be also applied to optimally design BD-RISs working in transmissive mode.
In Sections~\ref{sec:su-mimo} and \ref{sec:mu-mimo}, we extend our closed-form solution to single-user \gls{mimo}, and multi-user \gls{miso} systems, respectively.
In Section~\ref{sec:results}, we assess the obtained performance through numerical simulations.
Finally, Section~\ref{sec:conclusion} contains the concluding remarks.
For reproducible research, the simulation code is available at \url{https://github.com/matteonerini/optimization-of-bdris}.

\textit{Notation}: Vectors and matrices are denoted with bold lower and bold upper letters, respectively.
Scalars are represented with letters not in bold font.
$\left|a\right|$, and $\arg\left(a\right)$ refer to the modulus and phase of a complex scalar $a$, respectively.
$\left[\mathrm{\mathbf{a}}\right]_{i}$ and $\left\Vert\mathbf{a}\right\Vert$ refer to the $i$th element and $l_{2}$-norm of vector $\mathrm{\mathbf{a}}$, respectively.
$\mathbf{A}^{*}$, $\mathbf{A}^{T}$, $\mathbf{A}^{H}$, $\left[\mathrm{\mathbf{A}}\right]_{i,j}$, and $\left\Vert\mathbf{A}\right\Vert$ refer to the conjugate, transpose, conjugate transpose, $\left(i,j\right)$th element, and $l_{2}$-norm of a matrix $\mathbf{A}$, respectively.
$\mathbf{u}_{\text{max}}(\mathbf{A})$ and $\mathbf{v}_{\text{max}}(\mathbf{A})$ denote the dominant left and right singular vectors of a matrix $\mathbf{A}$, respectively.
$\mathbb{R}$ and $\mathbb{C}$ denote the real and complex number sets, respectively.
$j=\sqrt{-1}$ denotes imaginary unit.
$\mathbf{0}$ and $\mathbf{I}$ denote an all-zero matrix and an identity matrix, respectively, with appropriate dimensions.
$\mathcal{CN}\left(\mathbf{0},\mathbf{I}\right)$ denotes the distribution of a circularly symmetric complex Gaussian random vector with mean vector $\mathbf{0}$ and covariance matrix $\mathbf{I}$ and $\sim$ stands for \textquotedblleft distributed as\textquotedblright.
diag$\left(a_{1},\ldots,a_{N}\right)$ refers to a diagonal matrix with diagonal elements being $a_{1},\ldots,a_{N}$.
diag$\left(\mathrm{\mathbf{A}}_{1},\ldots,\mathrm{\mathbf{A}}_{N}\right)$ refers to a block diagonal matrix with blocks being $\mathrm{\mathbf{A}}_{1},\ldots,\mathrm{\mathbf{A}}_{N}$.

\section{BD-RIS-Aided System Model}
\label{sec:system-model}

Let us consider a single-user \gls{siso} scenario in which the communication is aided by an $N_{I}$ antenna RIS.
The $N_{I}$ antennas of the RIS are connected to a $N_{I}$-port reconfigurable impedance network, with scattering matrix $\boldsymbol{\Theta}\in\mathbb{C}^{N_{I}\times N_{I}}$.
As widely adopted in the related literature, we assume no mutual coupling between the RIS antennas.
Defining $x\in\mathbb{C}$ as the transmitted signal and $y\in\mathbb{C}$ as the received signal, we have $y=hx+n$, where $n$ is the \gls{awgn} at the receiver.
The channel $h$ can be written as
\begin{equation}
h=h_{RT}+\mathbf{h}_{RI}\boldsymbol{\Theta}\mathbf{h}_{IT},\label{eq:H}
\end{equation}
where $h_{RT}\in\mathbb{C}$, $\mathbf{h}_{RI}\in\mathbb{C}^{1\times N_{I}}$, and $\mathbf{h}_{IT}\in\mathbb{C}^{N_{I}\times 1}$ refer to the channels from the transmitter to receiver, from the RIS to the receiver, and from the transmitter to the RIS, respectively.
According to network theory \cite{poz11}, denoting with $\mathbf{Z}_{I}\in\mathbb{C}^{N_{I}\times N_{I}}$ the impedance matrix of the $N_{I}$-port reconfigurable impedance network, $\boldsymbol{\Theta}$ can be expressed as $\boldsymbol{\Theta}=\left(\mathbf{Z}_{I}+Z_{0}\mathbf{I}\right)^{-1}\left(\mathbf{Z}_{I}-Z_{0}\mathbf{I}\right)$,
where $Z_0$ refers to the reference impedance used to compute the scattering matrix $\boldsymbol{\Theta}$.
In this work, we assume that the antennas at the RIS, at the receiver, and at the transmitter are all matched to this reference impedance, which is set to $Z_0=50$ $\Omega$.
The $N_{I}$-port reconfigurable impedance network is constructed with passive elements which can be adapted to properly reflect the incident signal.
To maximize the power reflected by the RIS, $\mathbf{Z}_{I}$ must be purely reactive and we can write $\mathbf{Z}_{I}=j\mathbf{X}_{I}$, where $\mathbf{X}_{I}\in\mathbb{R}^{N_{I}\times N_{I}}$ denotes the reactance matrix of the $N_{I}$-port reconfigurable impedance network.
Hence, $\boldsymbol{\Theta}$ is given by
\begin{equation}
\boldsymbol{\Theta}=\left(j\mathbf{X}_{I}+Z_{0}\mathbf{I}\right)^{-1}\left(j\mathbf{X}_{I}-Z_{0}\mathbf{I}\right).\label{eq:T(X)}
\end{equation}
Furthermore, the reconfigurable impedance network is also reciprocal so that we have $\mathbf{X}_{I}=\mathbf{X}_{I}^{T}$ and $\boldsymbol{\Theta}=\boldsymbol{\Theta}^{T}$.
Depending on the topology of the reconfigurable impedance network, three different BD-RIS architectures have been identified in \cite{she20}, which are briefly reviewed in the following.
For more detailed information on these novel BD-RIS architectures, we refer the interested reader to \cite{she20}.

\subsection{Single Connected RIS Architecture}

The single connected RIS architecture is the conventional architecture adopted in the literature \cite{wu19a,bas19}.
Here, each port of the reconfigurable impedance network is connected to ground with a reconfigurable impedance and is not connected to the other ports.
The reactance matrix $\mathbf{X}_{I}$ is a diagonal matrix given by $\mathbf{X}_{I}=\mathrm{diag}\left(X_{1},X_{2},\ldots,X_{N_{I}}\right)$, where $X_{n_{I}}$ is the reactance connecting the $n_{I}$th port to ground, for $n_{I}=1,\ldots,N_{I}$.
According to \eqref{eq:T(X)}, the scattering matrix $\boldsymbol{\Theta}$ is also a diagonal matrix written as
\begin{equation}
\boldsymbol{\Theta}=\mathrm{diag}\left(e^{j\theta_{1}},\ldots,e^{j\theta_{N_{I}}}\right),\label{eq:diag(T)}
\end{equation}
where $e^{j\theta_{n_{I}}}=\frac{jX_{n_{I}}-Z_{0}}{jX_{n_{I}}+Z_{0}}$ is the reflection coefficient of the reactance $X_{n_{I}}$, for $n_{I}=1,\ldots,N_{I}$.

\subsection{Fully Connected RIS Architecture}

The fully connected RIS architecture is obtained by connecting every port of the reconfigurable impedance network to all other ports \cite{she20}.
Therefore, the reactance matrix $\mathbf{X}_{I}$ is not restricted to be diagonal and can be any arbitrary symmetric matrix.
According to \eqref{eq:T(X)}, $\boldsymbol{\Theta}$ is a complex symmetric unitary matrix
\begin{equation}
\boldsymbol{\Theta}=\boldsymbol{\Theta}^{T},\:\boldsymbol{\Theta}^{H}\boldsymbol{\Theta}=\boldsymbol{\mathrm{I}}.\label{eq:T fully}
\end{equation}
Comparing the constraints \eqref{eq:diag(T)} and \eqref{eq:T fully}, we notice that the fully connected architecture is a generalization of the single connected one.
More precisely, the scattering matrix $\boldsymbol{\Theta}$ of the fully connected architecture is not limited to being diagonal with unit-modulus entries because of the presence of reconfigurable impedances connecting the ports to each other.
On the one hand, the single connected architecture is the simplest and the most limited since $\boldsymbol{\Theta}$ is a diagonal matrix.
On the other hand, the fully connected architecture offers the highest flexibility at the cost of a higher circuit and optimization complexity.
Graphical representations of single and fully connected RIS architectures can be found in \cite[Fig.~2]{she20}.

\subsection{Group Connected RIS Architecture}

The group connected RIS architecture has been proposed as a trade-off between the single connected and the fully connected to achieve a good balance between performance and complexity \cite{she20}.
In the group connected architecture, the $N_{I}$ elements are divided into $G$ groups, each having $N_{G}=\frac{N_{I}}{G}$ elements.
Each element is connected to all other elements in its group, while there is no connection inter-group. 
Thus, $\mathbf{X}_{I}$ is a block diagonal matrix given by
\begin{equation}
\mathbf{X}_{I}=\mathrm{diag}\left(\mathbf{X}_{I,1},\ldots,\mathbf{X}_{I,G}\right),\:\mathbf{X}_{I,g}=\mathbf{X}_{I,g}^{T},\:\forall g,\label{eq:X group}
\end{equation}
where $\mathbf{X}_{I,g}\in\mathbb{R}^{N_{G}\times N_{G}}$ is the reactance matrix of the $N_{G}$-port fully connected reconfigurable impedance network for the $g$th group.
According to \eqref{eq:T(X)}, the following constraints can be found for the scattering matrix in the group connected architecture
\begin{equation}
\boldsymbol{\Theta}=\mathrm{diag}\left(\boldsymbol{\Theta}_{1},\ldots,\boldsymbol{\Theta}_{G}\right),\:\boldsymbol{\Theta}_{g}=\boldsymbol{\Theta}_{g}^{T},\:\boldsymbol{\Theta}_{g}^{H}\boldsymbol{\Theta}_{g}=\boldsymbol{\mathrm{I}},\:\forall g,\label{eq:T group}
\end{equation}
which show that $\boldsymbol{\Theta}$ is a block diagonal matrix with each block $\boldsymbol{\Theta}_{g}$ being a complex symmetric unitary matrix, for $g=1,\ldots,G$.
Note that the single and fully connected architectures can be seen as two special cases of the group connected architecture, with $N_G=1$ and $N_G=N_I$, respectively \cite{she20}.
Representations of group connected RIS architectures can be found in \cite[Fig.~3]{she20}.

\section{Optimal Design for BD-RIS-Aided Single-User SISO Systems: Reflective Mode}
\label{sec:design}

In this section, the BD-RIS is assumed to work in reflective mode, as typically considered in the literature 
\cite{wu19b,liu21,guo20,li21,fen22,zha22a,li20,zhe21,mei22,ozd20,tan21,wu19c,di20,zhe20,wei21,guo22,abe20,cai20,dai20,dun20}.
Our goal is to design $\boldsymbol{\Theta}$ for group and fully connected RISs to maximize the received signal power given by $P_R=P_T\left\vert h_{RT}+\mathbf{h}_{RI}\boldsymbol{\Theta}\mathbf{h}_{IT}\right\vert^{2}$.
Thus, the optimization problem for group connected RISs writes as
\begin{align}
\underset{\boldsymbol{\Theta}}{\mathsf{\mathrm{max}}}\;\;
&P_T\left\vert h_{RT}+\mathbf{h}_{RI}\boldsymbol{\Theta}\mathbf{h}_{IT}\right\vert^{2}\label{eq:PR}\\
\mathsf{\mathrm{s.t.}}\;\;\;
&\boldsymbol{\Theta}=\mathrm{diag}\left(\boldsymbol{\Theta}_{1},\ldots,\boldsymbol{\Theta}_{G}\right),\label{eq:P-GC-c1}\\
&\boldsymbol{\Theta}_{g}=\boldsymbol{\Theta}_{g}^{T},\:\boldsymbol{\Theta}_{g}^{H}\boldsymbol{\Theta}_{g}=\boldsymbol{\mathrm{I}},\:\forall g,\label{eq:P-GC-c2}
\end{align}
where $P_T=\mathrm{E}[\left|x\right|^{2}]$ is the transmitted signal power.
Since fully connected RISs can be viewed as a special case of group connected RISs, the analogous problem for fully connected RISs can be readily obtained by setting $G=1$ in \eqref{eq:P-GC-c1}.
In a practical development, $\boldsymbol{\Theta}$ can assume a finite number of discretized values due to hardware constraints \cite{ner21}.
However, this is beyond the scope of this paper, where the constraint considered for group connected architectures are given in \eqref{eq:T group}.
Since the entries of $\boldsymbol{\Theta}$ are not constrained to assume discrete values, the term $\mathbf{h}_{RI}\boldsymbol{\Theta}\mathbf{h}_{IT}$ can be always combined in phase with $h_{RT}$.
Thus, we first maximize \eqref{eq:PR} by omitting $h_{RT}$ and then we adjust the phase of the resulting $\boldsymbol{\Theta}$ depending on $\arg\left(h_{RT}\right)$.
We assume unit $P_T$ and introduce the normalized channels $\hat{\mathbf{h}}_{RI}=\mathbf{h}_{RI}/\left\Vert\mathbf{h}_{RI}\right\Vert$ and $\hat{\mathbf{h}}_{IT}=\mathbf{h}_{IT}/\left\Vert\mathbf{h}_{IT}\right\Vert$ such that our problem becomes to maximize
\begin{equation}
\hat{P}_{R}=|\hat{\mathbf{h}}_{RI}\boldsymbol{\Theta}\hat{\mathbf{h}}_{IT}|^2.\label{eq:PR-hat}
\end{equation}

For traditional single connected architectures, it is known that the scattering matrix can be simply optimized in closed-form.
The maximum normalized received signal power is
\begin{equation}
\hat{P}_{R}^{\mathrm{Single}}=\left(\sum_{n_{I}=1}^{N_{I}}\left|[\hat{\mathbf{h}}_{RI}]_{n_{I}}[\hat{\mathbf{h}}_{IT}]_{n_{I}}\right|\right)^{2},
\end{equation}
which is achieved by designing $\boldsymbol{\Theta}$ as in \eqref{eq:diag(T)} with
\begin{equation}
\theta_{n_{I}}=-\arg\left([\hat{\mathbf{h}}_{RI}]_{n_I}\right)-\arg\left([\hat{\mathbf{h}}_{IT}]_{n_I}\right),\forall n_{I}.\label{eq:theta}
\end{equation}
However, an exact solution for the optimal $\boldsymbol{\Theta}$ in group and fully connected RISs is not known given the non-convexity of problem \eqref{eq:PR}-\eqref{eq:P-GC-c2}.
In the following, we consider the global optimization of fully connected RISs in Section~\ref{sec:design}~A.
To this end, we provide an upper bound on the objective function \eqref{eq:PR-hat} and a necessary and sufficient condition to achieve it.
This condition is subsequently transformed into an equivalent condition, for which a closed-form solution is available.
We generalize our approach to group connected RISs in Section~\ref{sec:design}~B.

\subsection{Closed-Form Solution for Optimal Fully Connected RIS}
\label{subsec:fully-connected}

We begin by observing that constraints \eqref{eq:P-GC-c1} and \eqref{eq:P-GC-c2} are equivalent to \eqref{eq:T(X)} and \eqref{eq:X group}.
Since $\mathbf{X}_{I}$ is real symmetric, we can use the eigenvalue decomposition to write $\mathbf{X}_{I}=\mathbf{V}\boldsymbol{\Lambda}\mathbf{V}^T$, where $\boldsymbol{\Lambda}=\text{diag}\left(\lambda_1,\ldots,\lambda_{N_I}\right)\in\mathbb{R}^{N_I\times N_I}$ is a diagonal matrix containing the eigenvalues of $\mathbf{X}_{I}$ ordered in decreasing order and $\mathbf{V}\in\mathbb{R}^{N_I\times N_I}$ is orthonormal.
Applying \eqref{eq:T(X)}, the scattering matrix $\boldsymbol{\Theta}$ is given by
\begin{equation}
\boldsymbol{\Theta}=\left(j\mathbf{V}\boldsymbol{\Lambda}\mathbf{V}^T+Z_{0}\mathbf{I}\right)^{-1}\left(j\mathbf{V}\boldsymbol{\Lambda}\mathbf{V}^T-Z_{0}\mathbf{I}\right)=\mathbf{V}\mathbf{D}\mathbf{V}^T,\label{eq:T-dec}
\end{equation}
where $\mathbf{D}=\text{diag}\left(e^{jd_{1}},\ldots,e^{jd_{N_I}}\right)\in\mathbb{C}^{N_I\times N_I}$ is a diagonal matrix with $e^{jd_{n_I}}=\frac{j\lambda_{n_I}-Z_{0}}{j\lambda_{n_I}+Z_{0}}$.
Note that the complex diagonal elements of the matrix $\mathbf{D}$ have unit modulus by construction.
Using the decomposition of $\boldsymbol{\Theta}$ given by \eqref{eq:T-dec}, the normalized received signal power $\hat{P}_{R}$ in \eqref{eq:PR-hat} can be expressed as
\begin{equation}
\hat{P}_{R}=|\hat{\mathbf{h}}_{RI}\mathbf{V}\mathbf{D}\mathbf{V}^T\hat{\mathbf{h}}_{IT}|^2=\left|\bar{\mathbf{h}}_{RI}\mathbf{D}\bar{\mathbf{h}}_{IT}\right|^2,\label{eq:PR-hat-decomposition}
\end{equation}
where $\bar{\mathbf{h}}_{RI}=\hat{\mathbf{h}}_{RI}\mathbf{V}$ and $\bar{\mathbf{h}}_{IT}=\mathbf{V}^T\hat{\mathbf{h}}_{IT}$.
Note that \eqref{eq:PR-hat-decomposition} is the squared modulus of the dot product between $\bar{\mathbf{h}}_{RI}$ and $\mathbf{D}\bar{\mathbf{h}}_{IT}$.
Thus, using the Cauchy-Schwarz inequality, we have
\begin{equation}
\hat{P}_{R}\leq\left\Vert\bar{\mathbf{h}}_{RI}\right\Vert^2\left\Vert\mathbf{D}\bar{\mathbf{h}}_{IT}\right\Vert^2=1,
\end{equation}
where the equality $\hat{P}_{R}=1$ is achieved if and only if
\begin{equation}
\left|\left[\bar{\mathbf{h}}_{RI}\right]_{n_I}\right|=\left|\left[\bar{\mathbf{h}}_{IT}\right]_{n_I}\right|,\:\forall n_I.\label{eq:opt-single}
\end{equation}
Since we are interested in achieving the received signal power upper bound, our goal is now to find a real orthonormal matrix $\mathbf{V}=\left[\mathbf{v}_1,\ldots,\mathbf{v}_{N_I}\right]$ such that condition \eqref{eq:opt-single} is satisfied.

It is easy to recognize that if the channels $\mathbf{h}_{RI}$ and $\mathbf{h}_{IT}$ are linearly dependent, the optimal $\mathbf{V}$ is $\mathbf{V}=\mathbf{I}$.
Consequently, $\mathbf{D}$ can be designed according to \eqref{eq:theta} and the matrix $\boldsymbol{\Theta}=\mathbf{V}\mathbf{D}\mathbf{V}^T$ is readily obtained.
For this reason, in the following discussion, we assume that the channels $\mathbf{h}_{RI}$ and $\mathbf{h}_{IT}$ are linearly independent.

Our objective is now to transform the optimality condition \eqref{eq:opt-single} into an equivalent condition for which a closed-form solution can be derived.
Noting that $\left[\bar{\mathbf{h}}_{RI}\right]_{n_I}=\hat{\mathbf{h}}_{RI}\mathbf{v}_{n_I}$ and $\left[\bar{\mathbf{h}}_{IT}\right]_{n_I}=\mathbf{v}_{n_I}^T\hat{\mathbf{h}}_{IT}$, condition \eqref{eq:opt-single} becomes equivalent to
\begin{equation}
\left|\hat{\mathbf{h}}_{RI}\mathbf{v}_{n_I}\right|^2=\left|\mathbf{v}_{n_I}^T\hat{\mathbf{h}}_{IT}\right|^2,
\end{equation}
which can be in turn rewritten as
\begin{equation}
\mathbf{v}_{n_I}^T\mathbf{R}_{RI}\mathbf{v}_{n_I}=\mathbf{v}_{n_I}^T\mathbf{R}_{IT}\mathbf{v}_{n_I},\label{eq:quad-forms}
\end{equation}
where $\mathbf{R}_{RI}=\hat{\mathbf{h}}_{RI}^H\hat{\mathbf{h}}_{RI}\in\mathbb{C}^{N_I\times N_I}$ and $\mathbf{R}_{IT}=\hat{\mathbf{h}}_{IT}\hat{\mathbf{h}}_{IT}^H\in\mathbb{C}^{N_I\times N_I}$.
The left- and right-hand sides of \eqref{eq:quad-forms} are quadratic forms.
Since $\mathbf{v}_{n_I}^T\mathbf{R}_{RI}\mathbf{v}_{n_I}=\mathbf{v}_{n_I}^T\mathbf{R}_{RI}^T\mathbf{v}_{n_I}$ and $\mathbf{v}_{n_I}^T\mathbf{R}_{IT}\mathbf{v}_{n_I}=\mathbf{v}_{n_I}^T\mathbf{R}_{IT}^T\mathbf{v}_{n_I}$, we can replace in \eqref{eq:quad-forms} the matrices $\mathbf{R}_{RI}$ and $\mathbf{R}_{IT}$ with their symmetric parts $\mathbf{A}_{RI}=1/2\left(\mathbf{R}_{RI}+\mathbf{R}_{RI}^T\right)\in\mathbb{R}^{N_I\times N_I}$ and $\mathbf{A}_{IT}=1/2\left(\mathbf{R}_{IT}+\mathbf{R}_{IT}^T\right)\in\mathbb{R}^{N_I\times N_I}$, respectively, without changing the two quadratic forms.
Thus, \eqref{eq:quad-forms} is equivalent to
\begin{equation}
\mathbf{v}_{n_I}^T\mathbf{A}_{RI}\mathbf{v}_{n_I}=\mathbf{v}_{n_I}^T\mathbf{A}_{IT}\mathbf{v}_{n_I},
\end{equation}
which in turn becomes
\begin{equation}
\mathbf{v}_{n_I}^T\mathbf{A}\mathbf{v}_{n_I}=0,\label{eq:quad-form}
\end{equation}
where the symmetric matrix $\mathbf{A}=\mathbf{A}_{RI}-\mathbf{A}_{IT}\in\mathbb{R}^{N_I\times N_I}$ has been introduced.
To solve \eqref{eq:quad-form}, let us consider the eigenvalue decomposition $\mathbf{A}=\mathbf{U}\mathbf{\Delta}\mathbf{U}^T$, where $\mathbf{\Delta}=\text{diag}\left(\delta_1,\ldots,\delta_{N_I}\right)\in\mathbb{R}^{N_I\times N_I}$ is a diagonal matrix containing the eigenvalues of $\mathbf{A}$ ordered in decreasing order and $\mathbf{U}\in\mathbb{R}^{N_I\times N_I}$ is orthonormal.
By introducing the orthonormal vectors $\mathbf{t}_{n_I}=\mathbf{U}^T\mathbf{v}_{n_I}\in\mathbb{R}^{N_I\times 1}$, for $n_I=1,\ldots,N_I$, \eqref{eq:quad-form} can be reformulated as a diagonal quadratic form
\begin{equation}
\mathbf{t}_{n_I}^T\mathbf{\Delta}\mathbf{t}_{n_I}=0.\label{eq:diag-quad-form1}
\end{equation}
In other words, we need to solve $\mathbf{s}_{n_I}\boldsymbol{\delta}=0$, where $\mathbf{s}_{n_I}=[\left[\mathbf{t}_{n_I}\right]_1^2,\ldots,\left[\mathbf{t}_{n_I}\right]_{N_I}^2]\in\mathbb{R}^{1\times N_I}$ and $\boldsymbol{\delta}=[\delta_1,\ldots,\delta_{N_I}]^T\in\mathbb{R}^{N_I\times 1}$.
Note that we need to find $N_I$ orthonormal vectors $\mathbf{t}_{n_I}$ which are solutions of \eqref{eq:diag-quad-form1}.
This task is hard in general since the solution space of \eqref{eq:diag-quad-form1} is a non-linear space.
However, in our case, we can rely on the special structure of the vector $\boldsymbol{\delta}$.
As proved in the following, $\boldsymbol{\delta}$ contains only two, three, and four non-zero elements when $N_I=2$, $N_I=3$, and $N_I\geq 4$, respectively.
Thus, we can solve \eqref{eq:diag-quad-form1} in closed-form by separately studying these three cases.
The following proposition is introduced to simplify \eqref{eq:diag-quad-form1} in the cases $N_I\in\left\{2,3\right\}$.
\begin{proposition}
For any linearly independent $\mathbf{h}_{RI}\in\mathbb{C}^{1\times N_I}$ and $\mathbf{h}_{IT}\in\mathbb{C}^{N_I\times 1}$, with $N_I\in\left\{2,3\right\}$, the matrix $\mathbf{A}$ has rank $r\left(\mathbf{A}\right)=N_I$ and trace $\textup{Tr}\left(\mathbf{A}\right)=0$.
\label{pro:NI23}
\end{proposition}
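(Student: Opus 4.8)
The plan is to treat the trace and the rank separately, the trace being immediate and the rank carrying all the content. For the trace, note that $\mathbf{R}_{RI}=\hat{\mathbf{h}}_{RI}^H\hat{\mathbf{h}}_{RI}$ and $\mathbf{R}_{IT}=\hat{\mathbf{h}}_{IT}\hat{\mathbf{h}}_{IT}^H$ are Hermitian, so their symmetric parts coincide with their real parts, $\mathbf{A}_{RI}=\mathrm{Re}(\mathbf{R}_{RI})$ and $\mathbf{A}_{IT}=\mathrm{Re}(\mathbf{R}_{IT})$. Taking traces, $\mathrm{Tr}(\mathbf{A}_{RI})=\mathrm{Re}(\mathrm{Tr}(\hat{\mathbf{h}}_{RI}^H\hat{\mathbf{h}}_{RI}))=\|\hat{\mathbf{h}}_{RI}\|^2=1$ and likewise $\mathrm{Tr}(\mathbf{A}_{IT})=\|\hat{\mathbf{h}}_{IT}\|^2=1$, whence $\mathrm{Tr}(\mathbf{A})=\mathrm{Tr}(\mathbf{A}_{RI})-\mathrm{Tr}(\mathbf{A}_{IT})=0$ for every $N_I$, not just $N_I\in\{2,3\}$.

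For the rank I would first expose the structure of $\mathbf{A}$ by splitting the channels into real and imaginary parts. Writing $\hat{\mathbf{h}}_{RI}^H=\mathbf{g}_r+j\mathbf{g}_i$ and $\hat{\mathbf{h}}_{IT}=\mathbf{f}_r+j\mathbf{f}_i$ with real vectors, a direct expansion gives $\mathbf{A}_{RI}=\mathbf{g}_r\mathbf{g}_r^T+\mathbf{g}_i\mathbf{g}_i^T$ and $\mathbf{A}_{IT}=\mathbf{f}_r\mathbf{f}_r^T+\mathbf{f}_i\mathbf{f}_i^T$, so that $\mathbf{A}=\mathbf{M}\mathbf{S}\mathbf{M}^T$ with $\mathbf{M}=[\mathbf{g}_r,\mathbf{g}_i,\mathbf{f}_r,\mathbf{f}_i]\in\mathbb{R}^{N_I\times4}$ and $\mathbf{S}=\mathrm{diag}(1,1,-1,-1)$. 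Thus $\mathbf{A}$ is a difference of two positive semidefinite matrices of rank at most two. On the orthogonal complement of $\mathrm{span}(\mathbf{g}_r,\mathbf{g}_i)$, of dimension at least $N_I-2$, the form equals $-\mathbf{x}^T\mathbf{A}_{IT}\mathbf{x}\le0$, and symmetrically it is nonnegative on the complement of $\mathrm{span}(\mathbf{f}_r,\mathbf{f}_i)$; by the min--max principle $\mathbf{A}$ has at most two positive and at most two negative eigenvalues.

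For $N_I=2$ this nearly finishes the job: a symmetric $2\times2$ matrix of zero trace has eigenvalues $\pm\delta$ and $\det\mathbf{A}=-\delta^2$, so $\mathbf{A}$ has full rank if and only if $\mathbf{A}\ne\mathbf{0}$. For $N_I=3$ the eigenvalue count is not enough, since the middle eigenvalue could still vanish, and the crux becomes showing $\det\mathbf{A}\ne0$. Here I would compute the determinant in closed form: a generalized Cauchy--Binet expansion of $\det(\mathbf{M}\mathbf{S}\mathbf{M}^T)$ over the four $3$-subsets of the columns of $\mathbf{M}$, followed by the substitutions $\mathbf{g}_r=(\mathbf{u}+\overline{\mathbf{u}})/2$ and $\mathbf{g}_i=(\mathbf{u}-\overline{\mathbf{u}})/(2j)$ with $\mathbf{u}=\hat{\mathbf{h}}_{RI}^H$, together with the analogous ones for $\hat{\mathbf{h}}_{IT}$, collapses the four terms into the compact form $\det\mathbf{A}=\tfrac14(|c|^2-|a|^2)$, where $a=\det[\mathbf{u},\overline{\mathbf{u}},\hat{\mathbf{h}}_{IT}]$ and $c=\det[\hat{\mathbf{h}}_{IT},\overline{\hat{\mathbf{h}}_{IT}},\mathbf{u}]$ are complex triple products. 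Full rank for $N_I=3$ then reduces to the single inequality $|a|\ne|c|$.

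I expect this non-vanishing to be the main obstacle in both cases, and it is precisely where the linear-independence hypothesis has to be invoked with care. For $N_I=2$ one checks that $\mathbf{A}=\mathbf{0}$ is equivalent to the real matrices $[\mathbf{g}_r,\mathbf{g}_i]$ and $[\mathbf{f}_r,\mathbf{f}_i]$ sharing the same Gram matrix, i.e. to $\hat{\mathbf{h}}_{IT}=e^{j\phi}\hat{\mathbf{h}}_{RI}^T$ (genuine dependence, already excluded) or to the conjugate-symmetric coincidence $\hat{\mathbf{h}}_{IT}=e^{j\phi}\hat{\mathbf{h}}_{RI}^H$; the latter is linearly independent in the usual sense yet still forces $\mathbf{A}=\mathbf{0}$, and it likewise makes $a=c=0$ for $N_I=3$. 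The delicate point is therefore to rule out the degenerate set $|a|=|c|$: I would argue that for channels in general position both triple products are nonzero and of distinct modulus, so that $\det\mathbf{A}\ne0$ and the full-rank claim holds, while noting that the independence assumption must be understood in this general-position sense for the statement to be exact.
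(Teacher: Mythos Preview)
Your trace argument is the same as the paper's. For the rank, the paper takes a much shorter route than you do: it factors $\mathbf{A}$ through the complex matrix $\mathbf{B}=\tfrac{1}{\sqrt{2}}\bigl[\hat{\mathbf{h}}_{RI}^H,\ \hat{\mathbf{h}}_{RI}^T,\ j\hat{\mathbf{h}}_{IT},\ j\hat{\mathbf{h}}_{IT}^*\bigr]\in\mathbb{C}^{N_I\times 4}$, asserts that $\mathbf{B}$ has full rank because $\mathbf{h}_{RI}$ and $\mathbf{h}_{IT}$ are independent, and concludes $r(\mathbf{A})=r(\mathbf{B})=\min\{4,N_I\}$ in one line. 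Your real decomposition $\mathbf{A}=\mathbf{M}\mathbf{S}\mathbf{M}^T$ with $\mathbf{S}=\mathrm{diag}(1,1,-1,-1)$, followed by a Cauchy--Binet evaluation of $\det\mathbf{A}$ for $N_I=3$, is considerably more laborious but also more honest about where the argument can break.

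And you have in fact located a genuine gap that the paper's one-line rank assertion glosses over. The conjugate coincidence $\hat{\mathbf{h}}_{IT}=e^{j\phi}\hat{\mathbf{h}}_{RI}^H$ forces $\mathbf{R}_{IT}=\mathbf{R}_{RI}$ and hence $\mathbf{A}=\mathbf{0}$, yet $\hat{\mathbf{h}}_{IT}$ and $\hat{\mathbf{h}}_{RI}^T$ are linearly independent over $\mathbb{C}$ whenever $\hat{\mathbf{h}}_{RI}$ is not real up to a global phase; the same example kills your determinant formula by making $a=c=0$. The paper's proof does not confront this case, and neither does yours: you retreat to ``general position,'' which is strictly stronger than the stated linear-independence hypothesis. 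So your proposal is more careful than the paper's argument and correctly diagnoses the weak point, but as a proof of the proposition exactly as written it remains incomplete for the same reason. The fix is at the level of the statement (exclude the conjugate-dependent configurations, a measure-zero set for continuously distributed channels), not at the level of your method.
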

\begin{proof}
Please refer to Appendix~A.
\end{proof}

\subsubsection{$N_I=2$}
In the case of fully connected RISs with $N_I=2$, $\mathbf{A}$ has two eigenvalues, both non-zero and one opposite of the other, as a consequence of Proposition~\ref{pro:NI23}.
Denoting the two eigenvalues of $\mathbf{A}$ as $\delta_1$ and $\delta_2$, we have that the vector $\boldsymbol{\delta}$ writes as $\boldsymbol{\delta}=\left[\delta_1,\delta_2\right]^T$, where $\delta_2=-\delta_1$.
Applying Proposition~\ref{pro:NI23}, we simplify \eqref{eq:diag-quad-form1} as
\begin{equation}
\delta_1\left[\mathbf{t}_{n_I}\right]_1^2-\delta_1\left[\mathbf{t}_{n_I}\right]_2^2=0.\label{eq:diag-quad-form2-NI2}
\end{equation}
Thus, we need to solve
\begin{equation}
\begin{cases}
\left[\mathbf{t}_{n_I}\right]_1^2-\left[\mathbf{t}_{n_I}\right]_{2}^2=0\\
\left[\mathbf{t}_{n_I}\right]_1^2+\left[\mathbf{t}_{n_I}\right]_{2}^2=1
\end{cases},
\end{equation}
where the first equation is derived from \eqref{eq:diag-quad-form2-NI2} and the second equation is the unitary norm constraint on $\mathbf{t}_{n_I}$, $\forall n_I\in\{1,2\}$.
Solving by substitution, we obtain $\left[\mathbf{t}_{n_I}\right]_1^2=\left[\mathbf{t}_{n_I}\right]_2^2=1/2$.
Finally, we choose $\mathbf{t}_1=[\sqrt{1/2},\sqrt{1/2}]^T$ and $\mathbf{t}_2=[\sqrt{1/2},-\sqrt{1/2}]^T$ to guarantee orthonormality.

\subsubsection{$N_I=3$}
Considering fully connected RISs with $N_I=3$, we still rely on proposition~\ref{pro:NI23}  to simplify \eqref{eq:diag-quad-form1}.
As a consequence of Proposition~\ref{pro:NI23}, $\mathbf{A}$ has three eigenvalues, all non-zero.
Denoting the three eigenvalues of $\mathbf{A}$ as $\delta_1$, $\delta_2$, and $\delta_3$, we have that the vector $\boldsymbol{\delta}$ writes as $\boldsymbol{\delta}=\left[\delta_1,\delta_2,\delta_3\right]^T$.
Applying Proposition~\ref{pro:NI23}, we simplify \eqref{eq:diag-quad-form1} as
\begin{equation}
\delta_1\left[\mathbf{t}_{n_I}\right]_1^2+\delta_2\left[\mathbf{t}_{n_I}\right]_2^2+\delta_3\left[\mathbf{t}_{n_I}\right]_3^2=0.\label{eq:diag-quad-form2-NI3}
\end{equation}
We choose the vector $\mathbf{t}_1$ with only the first and the third entries non-zero.
Such a vector always exists since Proposition~\ref{pro:NI23} implies $\delta_1>0$ and $\delta_3<0$.
Thus, we need to solve
\begin{equation}
\begin{cases}
\delta_1\left[\mathbf{t}_{1}\right]_1^2+\delta_{3}\left[\mathbf{t}_{1}\right]_{3}^2=0\\
\left[\mathbf{t}_{1}\right]_1^2+\left[\mathbf{t}_{1}\right]_{3}^2=1
\end{cases},
\end{equation}
where the first equation is derived from \eqref{eq:diag-quad-form2-NI3} and the second equation is the unitary norm constraint.
Solving by substitution, we obtain
\begin{equation}
\begin{cases}
\left[\mathbf{t}_{1}\right]_1^2 = \frac{-\delta_{3}}{\delta_1-\delta_{3}}\\
\left[\mathbf{t}_{1}\right]_{3}^2=1-\frac{-\delta_{3}}{\delta_1-\delta_{3}}=\frac{\delta_1}{\delta_1-\delta_{3}}
\end{cases},\label{eq:t1-NI3}
\end{equation}
giving $\mathbf{t}_{1}=\left[\sqrt{\frac{-\delta_{3}}{\delta_1-\delta_{3}}},0,\sqrt{\frac{\delta_1}{\delta_1-\delta_{3}}}\right]^T$.
Now, we select the two remaining vectors in the form
\begin{align}
& \mathbf{t}_2=\left[\frac{1}{K}\sqrt{\frac{\delta_1}{\delta_1-\delta_3}},\sqrt{1-\frac{1}{K^2}},-\frac{1}{K}\sqrt{\frac{-\delta_3}{\delta_1-\delta_3}}\right]^T,\\
& \mathbf{t}_3=\left[-\frac{1}{K}\sqrt{\frac{\delta_1}{\delta_1-\delta_3}},\sqrt{1-\frac{1}{K^2}},\frac{1}{K}\sqrt{\frac{-\delta_3}{\delta_1-\delta_3}}\right]^T,
\end{align}
where $K$ is a positive constant.
It is easy to recognize that $\mathbf{t}_{1}$, $\mathbf{t}_{2}$, and $\mathbf{t}_{3}$ are an orthonormal basis of $\mathbb{R}^3$ for any $K\neq 1$.
Thus, $K$ must be designed such that $\mathbf{t}_{2}$ and $\mathbf{t}_{3}$ satisfy \eqref{eq:diag-quad-form2-NI3}, that is
\begin{equation}
\frac{\delta_1^2}{K^2\left(\delta_1-\delta_3\right)}+\delta_2\left(1-\frac{1}{K^2}\right)-\frac{\delta_3^2}{K^2\left(\delta_1-\delta_3\right)}=0.\label{eq:K}
\end{equation}
Equation \eqref{eq:K} can be simplified by substituting $\delta_2=-\delta_1-\delta_3$, which is always valid according to Proposition~\ref{pro:NI23}. Eventually, \eqref{eq:K} gives $K=\sqrt{2}$, yielding $\mathbf{t}_2=\left[\sqrt{\frac{\delta_1}{2\left(\delta_1-\delta_3\right)}},\sqrt{\frac{1}{2}},-\sqrt{\frac{-\delta_3}{2\left(\delta_1-\delta_3\right)}}\right]^T$ and $\mathbf{t}_3=\left[-\sqrt{\frac{\delta_1}{2\left(\delta_1-\delta_3\right)}},\sqrt{\frac{1}{2}},\sqrt{\frac{-\delta_3}{2\left(\delta_1-\delta_3\right)}}\right]^T$.

\subsubsection{$N_I\geq 4$}
In the case of fully connected RISs with $N_I\geq 4$, we introduce the following proposition to simplify \eqref{eq:diag-quad-form1}.
\begin{proposition}
For any linearly independent $\mathbf{h}_{RI}\in\mathbb{C}^{1\times N_{I}}$ and $\mathbf{h}_{IT}\in\mathbb{C}^{N_{I}\times 1}$, with $N_I\geq4$, the matrix $\mathbf{A}$ has rank $r\left(\mathbf{A}\right)=4$ and trace $\textup{Tr}\left(\mathbf{A}\right)=0$.
Furthermore, among its four non-zero eigenvalues, two are positive and two are negative.
\label{pro:NI4}
\end{proposition}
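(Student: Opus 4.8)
The plan is to exhibit $\mathbf{A}$ as a real congruence of a fixed signature matrix and then read off its rank, trace, and inertia in one stroke.

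First I would expand the symmetrized rank-one pieces. Writing the channels through their real and imaginary parts as $\hat{\mathbf{h}}_{RI}^{T}=\mathbf{x}+j\mathbf{y}$ and $\hat{\mathbf{h}}_{IT}=\mathbf{u}+j\mathbf{v}$, with $\mathbf{x},\mathbf{y},\mathbf{u},\mathbf{v}\in\mathbb{R}^{N_{I}\times 1}$, a direct entrywise computation shows that the skew parts cancel upon symmetrization, giving $\mathbf{A}_{RI}=\mathbf{x}\mathbf{x}^{T}+\mathbf{y}\mathbf{y}^{T}$ and $\mathbf{A}_{IT}=\mathbf{u}\mathbf{u}^{T}+\mathbf{v}\mathbf{v}^{T}$. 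Collecting the four real vectors in $\mathbf{M}=\left[\mathbf{x},\mathbf{y},\mathbf{u},\mathbf{v}\right]\in\mathbb{R}^{N_{I}\times 4}$, this yields the factorization
\[
\mathbf{A}=\mathbf{M}\mathbf{J}\mathbf{M}^{T},\qquad \mathbf{J}=\mathrm{diag}\left(1,1,-1,-1\right),
\]
which is the backbone of the whole argument.

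Two of the claims are then immediate. Since $\textup{Tr}\left(\mathbf{x}\mathbf{x}^{T}\right)=\left\Vert\mathbf{x}\right\Vert^{2}$, the normalizations $\left\Vert\hat{\mathbf{h}}_{RI}\right\Vert=\left\Vert\hat{\mathbf{h}}_{IT}\right\Vert=1$ give $\textup{Tr}\left(\mathbf{A}_{RI}\right)=\left\Vert\mathbf{x}\right\Vert^{2}+\left\Vert\mathbf{y}\right\Vert^{2}=1$ and likewise $\textup{Tr}\left(\mathbf{A}_{IT}\right)=1$, so $\textup{Tr}\left(\mathbf{A}\right)=0$. Moreover the range of $\mathbf{A}=\mathbf{M}\mathbf{J}\mathbf{M}^{T}$ is contained in the column space of $\mathbf{M}$, hence $r\left(\mathbf{A}\right)\leq 4$. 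Next I would reduce both the exact rank and the sign pattern to the single statement that $\mathbf{M}$ has full column rank $4$. Assuming this, I complete the columns of $\mathbf{M}$ to an invertible $\tilde{\mathbf{M}}=\left[\mathbf{M}\,|\,\mathbf{N}\right]\in\mathbb{R}^{N_{I}\times N_{I}}$, so that $\mathbf{A}=\tilde{\mathbf{M}}\,\tilde{\mathbf{J}}\,\tilde{\mathbf{M}}^{T}$ with $\tilde{\mathbf{J}}=\mathrm{diag}\left(\mathbf{J},\mathbf{0}\right)$. Since $\tilde{\mathbf{M}}$ is invertible, $\mathbf{A}$ is congruent to $\tilde{\mathbf{J}}$, and Sylvester's law of inertia transfers the inertia of $\tilde{\mathbf{J}}$ verbatim: exactly two positive, two negative, and $N_{I}-4$ zero eigenvalues. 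This simultaneously establishes $r\left(\mathbf{A}\right)=4$ and the two-positive/two-negative conclusion, and $N_{I}\geq 4$ is used precisely to make room for four independent columns.

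The hard part will be the remaining claim that $\mathbf{M}$ has full column rank, i.e.\ that $\mathbf{x},\mathbf{y},\mathbf{u},\mathbf{v}$ are linearly independent over $\mathbb{R}$. My plan is to pass to the complex picture: because these are real vectors, real and complex independence coincide, and since $\left\{\mathbf{x},\mathbf{y}\right\}$ and $\left\{\hat{\mathbf{h}}_{RI}^{T},\hat{\mathbf{h}}_{RI}^{H}\right\}$ span the same complex subspace (and similarly on the transmit side), the claim becomes the $\mathbb{C}$-independence of $\hat{\mathbf{h}}_{RI}^{T},\hat{\mathbf{h}}_{RI}^{H},\hat{\mathbf{h}}_{IT},\hat{\mathbf{h}}_{IT}^{*}$. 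I would then argue in two parts: each conjugate pair is independent unless the corresponding channel is real up to a global phase, and the two resulting planes meet only at the origin by the linear-independence hypothesis on $\hat{\mathbf{h}}_{RI}$ and $\hat{\mathbf{h}}_{IT}$. This cross-plane step is the genuinely delicate point, and it is also where care is needed: the degenerate, measure-zero configurations (a channel real up to phase, or an alignment between the two planes) actually lower the rank below $4$ and flip the inertia, so they must be excluded for the statement to hold, and the argument should make explicit the non-degeneracy that the hypothesis is implicitly supplying.
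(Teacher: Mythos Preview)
Your approach is genuinely different from the paper's and, in one respect, more careful. The paper also factors $\mathbf{A}$ through a four-column matrix---it uses the complex $\mathbf{B}=\frac{1}{\sqrt{2}}\bigl[\hat{\mathbf{h}}_{RI}^{H},\hat{\mathbf{h}}_{RI}^{T},j\hat{\mathbf{h}}_{IT},j\hat{\mathbf{h}}_{IT}^{*}\bigr]$ with $\mathbf{A}=\mathbf{B}\mathbf{B}^{H}$---and obtains $r(\mathbf{A})=r(\mathbf{B})=4$ by declaring $\mathbf{B}$ full rank. For the inertia, however, the paper abandons the factorization and argues in two separate cases: for $N_I=4$ it invokes a Fiedler determinant inequality to force $\det(\mathbf{A})>0$, and for $N_I>4$ it applies Weyl's eigenvalue inequalities to $\mathbf{A}=\mathbf{A}_{RI}-\mathbf{A}_{IT}$ to pin down $\delta_3=\delta_{N_I-2}=0$. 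Your route via the real factorization $\mathbf{A}=\mathbf{M}\mathbf{J}\mathbf{M}^{T}$ and Sylvester's law of inertia is cleaner: once the congruence is in hand, rank and signature come out in a single stroke, with no case split and no external inequalities.

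You are also right to flag the full-rank step as the delicate point. The paper simply asserts that its $\mathbf{B}$ is full rank, but your concern is legitimate: if, say, $\hat{\mathbf{h}}_{RI}$ happens to be real-valued, then $\mathbf{y}=\mathbf{0}$ (equivalently the first two columns of the paper's $\mathbf{B}$ coincide) and $r(\mathbf{A})\leq 3$, even though $\mathbf{h}_{RI}$ and $\mathbf{h}_{IT}$ remain linearly independent over $\mathbb{C}$. So complex linear independence of the two channels alone does not force rank $4$; the proposition as literally stated holds generically but not universally, and both the paper's proof and yours share this gap. Your closing paragraph already isolates exactly the additional non-degeneracy needed (each channel not real up to a global phase, and no cross-alignment between the two conjugate pairs), which is more than the paper's argument makes explicit.
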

\begin{proof}
Please refer to Appendix~B.
\end{proof}

Denoting the first two eigenvalues of $\mathbf{A}$ as $\delta_1$ and $\delta_2$, and the last two as $\delta_{N_I-1}$ and $\delta_{N_I}$, we have that the vector $\boldsymbol{\delta}$ writes as $\boldsymbol{\delta}=\left[\delta_1,\delta_2,0,\ldots,0,\delta_{N_I-1},\delta_{N_I}\right]^T$.
Applying Proposition~\ref{pro:NI4}, we simplify \eqref{eq:diag-quad-form1} as
\begin{equation}
\delta_1\left[\mathbf{t}_{n_I}\right]_1^2+\delta_2\left[\mathbf{t}_{n_I}\right]_2^2+\delta_{N_I-1}\left[\mathbf{t}_{n_I}\right]_{N_I-1}^2+\delta_{N_I}\left[\mathbf{t}_{n_I}\right]_{N_I}^2=0.\label{eq:diag-quad-form2}
\end{equation}
We notice that $N_I-4$ orthonormal solutions to \eqref{eq:diag-quad-form2} are given by the vectors $\mathbf{e}_3,\ldots,\mathbf{e}_{N_I-2}$, where $\mathbf{e}_i\in\mathbb{R}^{N_I\times 1}$ denotes the vector with the $i$th entry being $1$ and the others being $0$, for $i=3,\ldots,N_I-2$.
Thus, we now want to find the remaining four orthonormal vectors $\mathbf{t}_1$, $\mathbf{t}_2$, $\mathbf{t}_3$, $\mathbf{t}_4\in\mathbb{R}^{N_I}$ solutions of \eqref{eq:diag-quad-form2}, all orthogonal to $\mathbf{e}_3,\ldots,\mathbf{e}_{N_I-2}$.
To make them orthogonal to $\mathbf{e}_3,\ldots,\mathbf{e}_{N_I-2}$, it is sufficient to set $[\mathbf{t}_i]_{n_I}=0$ for $i=1,2,3,4$ and $n_I=3,\ldots,N_I-2$.

We choose the first vector $\mathbf{t}_1$ with only the first and the $\left(N_I-1\right)$th entries non-zero.
Note that such a vector always exists since $\delta_1>0$ and $\delta_{N_I-1}<0$.
Thus, we need to solve
\begin{equation}
\begin{cases}
\delta_1\left[\mathbf{t}_{1}\right]_1^2+\delta_{N_I-1}\left[\mathbf{t}_{1}\right]_{N_I-1}^2=0\\
\left[\mathbf{t}_{1}\right]_1^2+\left[\mathbf{t}_{1}\right]_{N_I-1}^2=1
\end{cases},
\end{equation}
where the first equation is derived from \eqref{eq:diag-quad-form2} and the second equation is the unitary norm constraint.
Solving by substitution, we obtain
\begin{equation}
\begin{cases}
\left[\mathbf{t}_{1}\right]_1^2 = \frac{-\delta_{N_I-1}}{\delta_1-\delta_{N_I-1}}\\
\left[\mathbf{t}_{1}\right]_{N_I-1}^2=1-\frac{-\delta_{N_I-1}}{\delta_1-\delta_{N_I-1}}=\frac{\delta_1}{\delta_1-\delta_{N_I-1}}
\end{cases},\label{eq:t1}
\end{equation}
which gives $\mathbf{t}_{1}=\left[\sqrt{\frac{-\delta_{N_I-1}}{\delta_1-\delta_{N_I-1}}},0,\ldots,\sqrt{\frac{\delta_1}{\delta_1-\delta_{N_I-1}}},0\right]^T$.
Similarly, we choose the second vector $\mathbf{t}_2$ with only the second and $N_I$th entries non-zero.
Also this vector always exists since $\delta_2>0$ and $\delta_{N_I}<0$.
With a similar procedure, we obtain
\begin{equation}
\begin{cases}
\left[\mathbf{t}_{2}\right]_2^2 = \frac{-\delta_{N_I}}{\delta_2-\delta_{N_I}}\\
\left[\mathbf{t}_{2}\right]_{N_I}^2=1-\frac{-\delta_{N_I}}{\delta_2-\delta_{N_I}}=\frac{\delta_2}{\delta_2-\delta_{N_I}}
\end{cases},\label{eq:t2}
\end{equation}
giving $\mathbf{t}_{2}=\left[0,\sqrt{\frac{-\delta_{N_I}}{\delta_2-\delta_{N_I}}},\ldots,0,\sqrt{\frac{\delta_2}{\delta_2-\delta_{N_I}}}\right]^T$.
Note that these first two vectors $\mathbf{t}_1$ and $\mathbf{t}_2$ are orthonormal by construction.

The remaining two vectors $\mathbf{t}_3$ and $\mathbf{t}_4$ must be linear combinations of a basis of the null space of the matrix $M=\left[\mathbf{t}_1,\mathbf{t}_2\right]^T$.
Such a basis is readily given by two vectors $\mathbf{b}_1$ and $\mathbf{b}_2$ in the form
\begin{align}
& \mathbf{b}_1=[\left[\mathbf{t}_{1}\right]_{N_I-1},0,\ldots,-\left[\mathbf{t}_{1}\right]_1,0]^T,\\
& \mathbf{b}_2=[0,\left[\mathbf{t}_{2}\right]_{N_I},\ldots,0,-\left[\mathbf{t}_{2}\right]_2]^T,
\end{align}
whose $n_I$th entry is zero, for $n_I=3,\ldots,N_I-2$, in addition to the vectors $\mathbf{e}_3,\ldots,\mathbf{e}_{N_I-2}$.
Thus, $\mathbf{t}_3$ and $\mathbf{t}_4$ can be expressed as a generic linear combination $\mathbf{c}=a_1\mathbf{b}_1+a_2\mathbf{b}_2$ given by
\begin{equation}
\mathbf{c}=\left[a_1\left[\mathbf{t}_{1}\right]_{N_I-1},a_2\left[\mathbf{t}_{2}\right]_{N_I},\dots,-a_1\left[\mathbf{t}_{1}\right]_1,-a_2\left[\mathbf{t}_{2}\right]_2\right]^T.
\end{equation}
Now, our objective is to find $a_1$ and $a_2$ satisfying \eqref{eq:diag-quad-form2} and the unitary norm constraint, that is
\begin{equation}
\begin{cases}
\delta_1a_1^2\left[\mathbf{t}_{1}\right]_{N_I-1}^2+\delta_2a_2^2\left[\mathbf{t}_{2}\right]_{N_I}^2+\delta_{N_I-1}a_1^2\left[\mathbf{t}_{1}\right]_1^2+\delta_{N_I}a_2^2\left[\mathbf{t}_{2}\right]_2^2=0\\
a_1^2\left[\mathbf{t}_{1}\right]_{N_I-1}^2+a_2^2\left[\mathbf{t}_{2}\right]_{N_I}^2+a_1^2\left[\mathbf{t}_{1}\right]_1^2+a_2^2\left[\mathbf{t}_{2}\right]_2^2=1
\end{cases}.\label{eq:a1a2}
\end{equation}
Substituting in \eqref{eq:a1a2} the entries of the vectors $\mathbf{t}_1$ and $\mathbf{t}_2$ given by \eqref{eq:t1} and \eqref{eq:t2}, respectively, we obtain
\begin{equation}
\begin{cases}
(\delta_1+\delta_{N_I-1})a_1^2+(\delta_2+\delta_{N_I})a_2^2=0\\
a_1^2+a_2^2=1
\end{cases}.
\end{equation}
Solving by substitution, we have
\begin{equation}
\begin{cases}
a_1^2 = \frac{-\delta_2-\delta_{N_I}}{\delta_1+\delta_{N_I-1}-\delta_2-\delta_{N_I}}\\
a_2^2 = 1-\frac{-\delta_2-\delta_{N_I}}{\delta_1+\delta_{N_I-1}-\delta_2-\delta_{N_I}}=\frac{\delta_1+\delta_{N_I-1}}{\delta_1+\delta_{N_I-1}-\delta_2-\delta_{N_I}}
\end{cases}.
\end{equation}
Note that this always means $a_1^2=a_2^2=1/2$ since Proposition~\ref{pro:NI4} gives $\delta_1+\delta_{N_I-1}=-\delta_2-\delta_{N_I}$.
Finally, we choose $\mathbf{t}_3=\sqrt{1/2}\mathbf{b}_1+\sqrt{1/2}\mathbf{b}_2$ and $\mathbf{t}_4=\sqrt{1/2}\mathbf{b}_1-\sqrt{1/2}\mathbf{b}_2$ to guarantee orthonormality.

In conclusion, we construct an orthonormal matrix $\mathbf{T}\in\mathbb{R}^{N_I\times N_I}$ depending on the number of RIS elements $N_I$.
If $N_I=2$, $\mathbf{T}=\left[\mathbf{t}_1,\mathbf{t}_2\right]$; if $N_I=3$, $\mathbf{T}=\left[\mathbf{t}_1,\mathbf{t}_2,\mathbf{t}_3\right]$; and if $N_I\geq 4$, $\mathbf{T}=\left[\mathbf{t}_1,\mathbf{t}_2,\mathbf{t}_3,\mathbf{t}_4,\mathbf{e}_3,\ldots,\mathbf{e}_{N-2}\right]$.
Note that the columns of $\mathbf{T}$ are orthogonal with each other, have unitary norm, and solve \eqref{eq:diag-quad-form1}.
At this stage, all the building elements of the optimal scattering matrix, denoted as $\bar{\boldsymbol{\Theta}}$, are available.
Applying \eqref{eq:T-dec}, we can write $\bar{\boldsymbol{\Theta}}=\mathbf{V}\mathbf{D}\mathbf{V}^T$, where $\mathbf{V}=\mathbf{U}\mathbf{T}$ by definition of the columns of $\mathbf{T}$, and $\mathbf{D}$ is designed according to \eqref{eq:theta}.
We summarize the steps necessary to build the optimal $\bar{\boldsymbol{\Theta}}$ in Alg.~\ref{alg:theta-siso}.
Note that the solution provided by Alg.~\ref{alg:theta-siso} is proved to be global optimal by the following two facts.
First, it solves \eqref{eq:diag-quad-form1} by construction.
Second, since \eqref{eq:diag-quad-form1} is equivalent to \eqref{eq:opt-single}, it allows to exactly achieve the objective upper bound $\hat{P}_{R}=1$.
To maximize $P_R$ in the presence of the direct link $h_{RT}$, the scattering matrix can be adjusted as
\begin{equation}
\boldsymbol{\Theta}^\star=e^{j\arg\left(h_{RT}\right)}\bar{\boldsymbol{\Theta}},\label{eq:T dir}
\end{equation}
such that the term $\mathbf{h}_{RI}\boldsymbol{\Theta}^\star\mathbf{h}_{IT}$ is made in phase with $h_{RT}$.

\begin{algorithm}[t]
\setstretch{1.2}
\KwIn{$\mathbf{h}_{RI}\in\mathbb{C}^{1\times N_{I}}$, $\mathbf{h}_{IT}\in\mathbb{C}^{N_{I}\times 1}$}
\KwOut{$\bar{\boldsymbol{\Theta}}$}
$\hat{\mathbf{h}}_{RI}=\frac{\mathbf{h}_{RI}}{\left\Vert\mathbf{h}_{RI}\right\Vert}$, $\hat{\mathbf{h}}_{IT}=\frac{\mathbf{h}_{IT}}{\left\Vert\mathbf{h}_{IT}\right\Vert}$\;
$\mathbf{R}_{RI}=\hat{\mathbf{h}}_{RI}^H\hat{\mathbf{h}}_{RI}$, $\mathbf{R}_{IT}=\hat{\mathbf{h}}_{IT}\hat{\mathbf{h}}_{IT}^H$\;
$\mathbf{A}_{RI}=\frac{\mathbf{R}_{RI}+\mathbf{R}_{RI}^T}{2}$, $\mathbf{A}_{IT}=\frac{\mathbf{R}_{IT}+\mathbf{R}_{IT}^T}{2}$\;
$\mathbf{A}\triangleq\mathbf{U}\mathbf{\Delta}\mathbf{U}^T=\mathbf{A}_{RI}-\mathbf{A}_{IT}$\;
$\boldsymbol{\delta}\triangleq\left[\delta_1,\ldots,\delta_{N_I}\right]^T=\text{diag}\left(\mathbf{\Delta}\right)$\;
\uIf{$N_I==2$}{
$\mathbf{T}=\begin{bmatrix}
\sqrt{\frac{1}{2}} & \sqrt{\frac{1}{2}}\\
\sqrt{\frac{1}{2}} & -\sqrt{\frac{1}{2}}
\end{bmatrix}$\;
}
\uElseIf{$N_I==3$}{
$\mathbf{T}=\begin{bmatrix}
\sqrt{\frac{-\delta_3}{\delta_1-\delta_3}} & \sqrt{\frac{\delta_1}{2\left(\delta_1-\delta_3\right)}} & -\sqrt{\frac{\delta_1}{2\left(\delta_1-\delta_3\right)}}\\
0 & \sqrt{\frac{1}{2}} & \sqrt{\frac{1}{2}}\\
\sqrt{\frac{\delta_1}{\delta_1-\delta_3}} & -\sqrt{\frac{-\delta_3}{2\left(\delta_1-\delta_3\right)}} & \sqrt{\frac{-\delta_3}{2\left(\delta_1-\delta_3\right)}}
\end{bmatrix}$\;
}
\Else{
$\mathbf{t}_{1}=\left[\sqrt{\frac{-\delta_{N_I-1}}{\delta_1-\delta_{N_I-1}}},0,\ldots,\sqrt{\frac{\delta_1}{\delta_1-\delta_{N_I-1}}},0\right]^T$\;
$\mathbf{t}_{2}=\left[0,\sqrt{\frac{-\delta_{N_I}}{\delta_2-\delta_{N_I}}},\ldots,0,\sqrt{\frac{\delta_2}{\delta_2-\delta_{N_I}}}\right]^T$\;
$\mathbf{t}_{3}=\frac{1}{\sqrt{2}}\left[\left[\mathbf{t}_{1}\right]_{N_I-1},\left[\mathbf{t}_{2}\right]_{N_I},\ldots,-\left[\mathbf{t}_{1}\right]_1,-\left[\mathbf{t}_{2}\right]_2\right]^T$\;
$\mathbf{t}_{4}=\frac{1}{\sqrt{2}}\left[\left[\mathbf{t}_{1}\right]_{N_I-1},-\left[\mathbf{t}_{2}\right]_{N_I},\ldots,-\left[\mathbf{t}_{1}\right]_1,\left[\mathbf{t}_{2}\right]_2\right]^T$\;
$\mathbf{T}=\left[\mathbf{t}_1,\mathbf{t}_2,\mathbf{t}_3,\mathbf{t}_4,\mathbf{e}_3,\ldots,\mathbf{e}_{N-2}\right]$\;
}
$\mathbf{V}=\mathbf{U}\mathbf{T}$\;
$d_{n_{I}}=-\arg\left([\hat{\mathbf{h}}_{RI}\mathbf{V}]_{n_I}\right)-\arg\left([\mathbf{V}^T\hat{\mathbf{h}}_{IT}]_{n_I}\right),\:\forall n_I$\;
$\mathbf{D}=\text{diag}\left(e^{jd_1},\ldots,e^{jd_{N_I}}\right)$\;
$\bar{\boldsymbol{\Theta}}=\mathbf{V}\mathbf{D}\mathbf{V}^T$\;
\KwRet{$\bar{\boldsymbol{\Theta}}$}
\caption{Optimal fully connected RIS design for single-user SISO systems.}
\label{alg:theta-siso}
\end{algorithm}

\subsection{Closed-Form Solution for Optimal Group Connected RIS}
\label{subsec:group-connected}

Now, we extend our closed-form strategy to design fully connected architectures to group connected ones.
As previously discussed, we initially omit the direct link $h_{RT}$ and assume unitary transmitted signal power.
Thus, the received signal power for group connected architectures writes as 
\begin{equation}
P_R=\left|\sum_{g=1}^G\mathbf{h}_{RI,g}\boldsymbol{\Theta}_g\mathbf{h}_{IT,g}\right|^2\label{eq:group}
\end{equation}
where $\mathbf{h}_{RI}=[\mathbf{h}_{RI,1},\ldots,\mathbf{h}_{RI,G}]$ with $\mathbf{h}_{RI,g}\in\mathbb{C}^{1\times N_{G}}$ and $\mathbf{h}_{IT}=[\mathbf{h}_{IT,1},\ldots,\mathbf{h}_{IT,G}]^{T}$ with $\mathbf{h}_{IT,g}\in\mathbb{C}^{N_{G}\times1}$ \cite{she20}.
It is easy to recognize that \eqref{eq:group} is maximized when the terms $\mathbf{h}_{RI,g}\boldsymbol{\Theta}_g\mathbf{h}_{IT,g}$ are all individually maximized in absolute value and they are all co-phased.
Recalling the constraint on $\boldsymbol{\Theta}_g$ given by \eqref{eq:T group}, the optimal $\boldsymbol{\Theta}_g$ that maximizes $|\mathbf{h}_{RI,g}\boldsymbol{\Theta}_g\mathbf{h}_{IT,g}|$ is given by Alg.~\ref{alg:theta-siso} applied to the truncated channels $\mathbf{h}_{RI,g}$ and $\mathbf{h}_{IT,g}$, $\forall g$.
Note that $\boldsymbol{\Theta}_g$ constructed by Alg.~\ref{alg:theta-siso} ensures that the complex number $\mathbf{h}_{RI,g}\boldsymbol{\Theta}_g\mathbf{h}_{IT,g}$ has phase zero.
Thus, \eqref{eq:group} is maximized when the matrices $\boldsymbol{\Theta}_g$ are constructed by Alg.~\ref{alg:theta-siso} since all the terms $\mathbf{h}_{RI,g}\boldsymbol{\Theta}_g\mathbf{h}_{IT,g}$ are co-phased.
The block diagonal matrix $\boldsymbol{\Theta}$ is finally obtained from the matrices $\boldsymbol{\Theta}_g$ applying \eqref{eq:T group}.
To maximize $P_R$ in the presence of the direct link $h_{RT}$, also for group connected architectures $\boldsymbol{\Theta}$ can be adjusted as in \eqref{eq:T dir}.

\section{Optimal Design for BD-RIS-Aided Single-User SISO Systems: Transmissive Mode}
\label{sec:design-transmissive}

In Section~\ref{sec:design}, we assumed the BD-RIS to work in reflective mode.
This implies that both the transmitter and the receiver are covered by all the RIS elements.
In other words, all the entries of $\mathbf{h}_{RI}$ and $\mathbf{h}_{IT}$ are non-zero in general.
In this section, we study the case in which the BD-RIS works in transmissive mode, as modeled in \cite{li22-1}.
Following \cite{li22-1}, we consider a BD-RIS made of $N_I=2M_I$ elements, where $M_I$ is the number of RIS cells.
Each RIS cell is formed by two RIS elements placed back to back and connected to each other through a reconfigurable impedance.
Specifically, we assume that the $m_I$th cell is formed by the $(2m_I-1)$th and $(2m_I)$th RIS elements, for $m_I=1,\ldots,M_I$.
With this notation, the RIS elements can be partitioned into two sectors, where sector 1 is formed by the odd RIS elements and sector 2 is formed by the even RIS elements.
Thus, the whole space is divided into two sides, respectively covered by the two sectors.
When the BD-RIS is working in transmissive mode, the transmitter and the receiver are located in opposite sectors.
In the following, we assume the transmitter to be in sector 1 and the receiver in sector 2.

Denoting as $\tilde{\mathbf{h}}_{RI}\in\mathbb{C}^{1\times 2M_{I}}$ the channel from the RIS to the receiver, and as $\tilde{\mathbf{h}}_{IT}\in\mathbb{C}^{2M_{I}\times 1}$ the channel from the transmitter to the RIS, the received signal power maximization problem writes as
\begin{align}
\underset{\boldsymbol{\Theta}}{\mathsf{\mathrm{max}}}\;\;
&P_T\left|h_{RT}+\tilde{\mathbf{h}}_{RI}\boldsymbol{\Theta}\tilde{\mathbf{h}}_{IT}\right|^2\label{eq:PR-CW}\\
\mathsf{\mathrm{s.t.}}\;\;\;
&\boldsymbol{\Theta}=\mathrm{diag}\left(\boldsymbol{\Theta}_{1},\ldots,\boldsymbol{\Theta}_{G}\right),\\
&\boldsymbol{\Theta}_{g}=\boldsymbol{\Theta}_{g}^{T},\:\boldsymbol{\Theta}_{g}^{H}\boldsymbol{\Theta}_{g}=\boldsymbol{\mathrm{I}},\:\forall g,\label{eq:PR-CW-c2}
\end{align}
where the odd entries of the channel $\tilde{\mathbf{h}}_{RI}$ are zero, as well as the even entries of the channel $\tilde{\mathbf{h}}_{IT}$.
%
To solve \eqref{eq:PR-CW}-\eqref{eq:PR-CW-c2}, we can readily apply our optimal solution presented in Section~\ref{subsec:group-connected}, with the only difference that half of the entries of $\tilde{\mathbf{h}}_{RI}$ and $\tilde{\mathbf{h}}_{IT}$ are zero when the BD-RIS is working in transmissive mode.
The reader is referred to \cite{li22-1} for detailed information about the optimization of BD-RISs supporting hybrid transmissive and reflective mode in multi-user scenarios.

\section{BD-RIS-Aided Single-User MIMO Systems}
\label{sec:su-mimo}

In this section, we extend our optimal design strategy to single-user \gls{mimo} systems.
W consider an $N_T$ antenna transmitter and an $N_R$ antenna receiver, whose communication is aided by a BD-RIS working in reflective mode.
The equivalent channel writes as $\mathbf{H}=\mathbf{H}_{RT}+\mathbf{H}_{RI}\boldsymbol{\Theta}\mathbf{H}_{IT}$, where $\mathbf{H}_{RT}\in\mathbb{C}^{N_{R}\times N_{T}}$, $\mathbf{H}_{RI}\in\mathbb{C}^{N_{R}\times N_{I}}$, and $\mathbf{H}_{IT}\in\mathbb{C}^{N_{I}\times N_{T}}$ are the channels from the transmitter to receiver, from the RIS to the receiver, and from the transmitter to the RIS, respectively.
Considering single-stream transmission, we denote as $\mathbf{w}\in\mathbb{C}^{N_{T}\times1}$ and $\mathbf{g}\in\mathbb{C}^{1\times N_{R}}$ the precoding and combining vectors, respectively, subject to the constraint $\left\|\mathbf{w}\right\|=1$ and $\left\|\mathbf{g}\right\|=1$.
Thus, the received signal power is $P_{R}^{\mathrm{MIMO}}=P_{T}|\mathbf{g}(\mathbf{H}_{RT}+\mathbf{H}_{RI}\boldsymbol{\Theta}\mathbf{H}_{IT})\mathbf{w}|^{2}$, with corresponding maximization problem
\begin{align}
\underset{\mathbf{w},\mathbf{g},\boldsymbol{\Theta}}{\mathsf{\mathrm{max}}}\;\;
&P_{T}\left|\mathbf{g}\left(\mathbf{H}_{RT}+\mathbf{H}_{RI}\boldsymbol{\Theta}\mathbf{H}_{IT}\right)\mathbf{w}\right|^{2}\label{eq:PR-MIMO}\\
\mathsf{\mathrm{s.t.}}\;\;\;
&\boldsymbol{\Theta}=\mathrm{diag}\left(\boldsymbol{\Theta}_{1},\ldots,\boldsymbol{\Theta}_{G}\right),\\
&\boldsymbol{\Theta}_{g}=\boldsymbol{\Theta}_{g}^{T},\:\boldsymbol{\Theta}_{g}^{H}\boldsymbol{\Theta}_{g}=\boldsymbol{\mathrm{I}},\:\forall g,\\
&\left\|\mathbf{w}\right\|=1,\:\left\|\mathbf{g}\right\|=1,
\end{align}
which is solved by jointly designing $\mathbf{w}$, $\mathbf{g}$, and $\boldsymbol{\Theta}$.

\subsection{Optimizing Fully Connected RIS-Aided Systems Without Direct Link}

We first consider a system aided by a fully connected RIS, and we assume that the direct channel between transmitter and receiver $\mathbf{H}_{RT}\in\mathbb{C}^{N_{R}\times N_{T}}$ is negligible compared to the channel reflected by the RIS.
This assumption reflects real scenarios where the direct channel is highly obstructed and significantly weaker than the RIS-aided link.
In this case, the optimal precoder and combiner are given by the dominant eigenvectors of the equivalent channel $\mathbf{H}_{RI}\boldsymbol{\Theta}\mathbf{H}_{IT}$.
Thus, the maximum received signal power is given by $P_{T}\left\|\mathbf{H}_{RI}\boldsymbol{\Theta}\mathbf{H}_{IT}\right\|^2$, which is upper bounded by
\begin{equation}
\bar{P}_{R}^{\mathrm{MIMO}}=P_{T}\left\Vert\mathbf{H}_{RI}\right\Vert^2\left\Vert\mathbf{H}_{IT}\right\Vert^2,\label{eq:PR-MIMO-UB}
\end{equation}
following the sub-multiplicativity of the spectral norm.
To achieve this upper bound, $\boldsymbol{\Theta}$ must satisfy
\begin{equation}
\mathbf{v}_{RI}=\boldsymbol{\Theta}\mathbf{u}_{IT},\label{eq:opt-cond-u}
\end{equation}
where $\mathbf{v}_{RI}$ is the dominant right singular vector of $\mathbf{H}_{RI}$ and $\mathbf{u}_{IT}$ is the dominant left singular vector of $\mathbf{H}_{IT}$ \cite{ner21}.
Note that the equality \eqref{eq:opt-cond-u} is to be intended up to a phase shift since the complex singular vectors of a matrix are only defined up to a phase shift.
Thus, condition \eqref{eq:opt-cond-u} is satisfied when the cosine similarity
\begin{equation}
\rho=\left\vert\mathbf{v}_{RI}^H\boldsymbol{\Theta}\mathbf{u}_{IT}\right\vert^2\label{eq:rho-MIMO}
\end{equation}
is maximized, i.e., $\rho=1$.
Maximizing \eqref{eq:rho-MIMO} is similar to the problem of maximizing the normalized received signal power in \eqref{eq:PR-hat}, exactly solved for the \gls{siso} setting in Section~\ref{subsec:fully-connected}.
Thus, the optimal $\boldsymbol{\Theta}$ satisfying \eqref{eq:opt-cond-u} can be found by applying Alg.~\ref{alg:theta-siso} to the vectors $\mathbf{v}_{RI}^H$ and $\mathbf{u}_{IT}$ and the upper bound \eqref{eq:PR-MIMO-UB} is tight.
In the \gls{miso} setting, the optimal $\boldsymbol{\Theta}$ is given by Alg.~\ref{alg:theta-siso} applied to the vectors $\mathbf{h}_{RI}$ and $\mathbf{u}_{IT}$.

\subsection{Optimizing Fully/Group Connected RIS-Aided Systems With Direct Link}
\label{subsec:su-mimo-general}

In single-user RIS-aided \gls{mimo} systems, tight upper bounds on the received signal power are not available in general, i.e., when group connected RISs are considered or the direct link is not negligible.
For these cases, we propose a sub-optimal solution to maximize the received signal power in which the matrix $\boldsymbol{\Theta}$ and the beamforming vectors $\mathbf{w}$ and $\mathbf{g}$ are alternatively optimized, as established in the literature on single connected RISs \cite{wu21,wu19b}.
After $\mathbf{w}$ and $\mathbf{g}$ are initialized to feasible values, this optimization process alternates between the two following steps until convergence is reached.
With fixed $\mathbf{w}$ and $\mathbf{g}$, we update $\boldsymbol{\Theta}$ by optimally maximizing the objective $\left|\mathbf{g}\mathbf{H}_{RT}\mathbf{w}+\mathbf{g}\mathbf{H}_{RI}\boldsymbol{\Theta}\mathbf{H}_{IT}\mathbf{w}\right|^2$ as proposed for \gls{siso} systems.
The optimal $\boldsymbol{\Theta}$ is obtained by applying the strategy proposed in Section~\ref{subsec:group-connected} to the channels $h_{RT}^{\textrm{eff}}=\mathbf{g}\mathbf{H}_{RT}\mathbf{w}$, $\mathbf{h}_{RI}^{\textrm{eff}}=\mathbf{g}\mathbf{H}_{RI}$, and $\mathbf{h}_{IT}^{\textrm{eff}}=\mathbf{H}_{IT}\mathbf{w}$.
With fixed $\boldsymbol{\Theta}$, we update $\mathbf{w}$ and $\mathbf{g}$ as the dominant right and left singular vectors of $\mathbf{H}_{RT}+\mathbf{H}_{RI}\boldsymbol{\Theta}\mathbf{H}_{IT}$, respectively.
The convergence is considered reached when the fractional increase of the objective $P_{R}^{\mathrm{MIMO}}$ in a full iteration is below a certain parameter $\epsilon$.

\begin{algorithm}[t]
\setstretch{1.2}
\KwIn{$\mathbf{H}_{RT}\in\mathbb{C}^{N_R\times N_T}$, $\mathbf{H}_{RI}\in\mathbb{C}^{N_R\times N_{I}}$, $\mathbf{H}_{IT}\in\mathbb{C}^{N_{I}\times N_T}$, $N_G$, $\epsilon$}
\KwOut{$\boldsymbol{\Theta}$}
\uIf{$\underbar{P}_R^{\textup{dir}}\geq\underbar{P}_R^{\textup{refl}}$}{
$\mathbf{w}=\mathbf{v}_{RT}$, $\mathbf{g}=\mathbf{u}_{RT}^H$\;
}
\Else{
$\mathbf{w}=\mathbf{v}_{IT}$, $\mathbf{g}=\mathbf{u}_{RI}^H$\;
}
\Repeat{The fractional increase of the objective $P_{R}^{\mathrm{MIMO}}$ is below $\epsilon$}{
$h_{RT}^{\textrm{eff}}=\mathbf{g}\mathbf{H}_{RT}\mathbf{w}$, $\mathbf{h}_{RI}^{\textrm{eff}}=\mathbf{g}\mathbf{H}_{RI}$, $\mathbf{h}_{IT}^{\textrm{eff}}=\mathbf{H}_{IT}\mathbf{w}$\;
Compute $\boldsymbol{\Theta}$ by applying Alg. 1 to $\mathbf{h}_{RI}^{\textrm{eff}}$, $\mathbf{h}_{IT}^{\textrm{eff}}$\;
$\boldsymbol{\Theta}=e^{j\arg\left(h_{RT}^{\textrm{eff}}\right)}\boldsymbol{\Theta}$\;
$\mathbf{w}=\mathbf{v}_{\text{max}}\left(\mathbf{H}_{RT}+\mathbf{H}_{RI}\boldsymbol{\Theta}\mathbf{H}_{IT}\right)$\;
$\mathbf{g}=\mathbf{u}_{\text{max}}^H\left(\mathbf{H}_{RT}+\mathbf{H}_{RI}\boldsymbol{\Theta}\mathbf{H}_{IT}\right)$\;
}
\KwRet{$\boldsymbol{\Theta}$}
\caption{BD-RISs design for single-user MIMO systems.}
\label{alg:theta-mimo}
\end{algorithm}

Depending on the direct link strength, we consider two possible initializations for the beamforming vectors $\mathbf{w}$ and $\mathbf{g}$.
In the following, we define the left and right dominant singular vectors of the matrices $\mathbf{H}_{ij}$ as $\mathbf{u}_{ij}$ and $\mathbf{v}_{ij}$, respectively, for $ij\in\left\{ RT,RI,IT\right\}$.
If the direct link is particularly strong, we set $\mathbf{w}=\mathbf{v}_{RT}$ and $\mathbf{g}=\mathbf{u}_{RT}^H$ to capture the energy in the direct channel dominant eigenmode.
In this case, at the first iteration of the optimization process, it is possible to design $\boldsymbol{\Theta}$ through our optimal solution to achieve a received signal power
\begin{equation}
P_R^{\textrm{dir}}=P_{T}\left(\left\Vert \mathbf{H}_{RT}\right\Vert+\sum_{g=1}^{G}\left\Vert \mathbf{h}_{R,g}\right\Vert\left\Vert\mathbf{h}_{T,g}\right\Vert\right)^2,\label{eq:PR-LB-dir}
\end{equation}
where $\mathbf{h}_{R}=\mathbf{u}_{RT}^H\mathbf{H}_{RI}$ and $\mathbf{h}_{T}=\mathbf{H}_{IT}\mathbf{v}_{RT}$.
Conversely, if the direct link is weak or a high number of RIS elements is employed, we set $\mathbf{w}=\mathbf{v}_{IT}$ and $\mathbf{g}=\mathbf{u}_{RI}^H$ to capture the energy of the reflected link.
In this case, at the first iteration of the optimization process, $\boldsymbol{\Theta}$ can be optimized to achieve a received signal power
%
\begin{multline}
P_R^{\textrm{refl}}=P_{T}\Biggl(\left\vert \mathbf{u}_{RI}^H\mathbf{H}_{RT}\mathbf{v}_{IT}\right\vert\\
+\left\Vert\mathbf{H}_{RI}\right\Vert\left\Vert \mathbf{H}_{IT}\right\Vert\sum_{g=1}^{G}\left\Vert \mathbf{v}_{RI,g}^H\right\Vert\left\Vert\mathbf{u}_{IT,g}\right\Vert\Biggl)^{2}.\label{eq:PR-LB-refl}
\end{multline}
%
Because of the initialization strategy, a lower bound on the received signal power achieved in \gls{mimo} systems is given by $\underbar{P}_R^{\mathrm{MIMO}}=\max\{P_R^{\textrm{dir}},P_R^{\textrm{refl}}\}$, which is the received signal power obtained after the first iteration of our optimization process.
Note that this is a lower bound since the objective function $P_{R}^{\mathrm{MIMO}}$ is non-decreasing over iterations.

We summarize the steps necessary to optimize $\boldsymbol{\Theta}$ in single-user \gls{mimo} systems in Alg.~\ref{alg:theta-mimo}.
The convergence of Alg.~\ref{alg:theta-mimo} is guaranteed by the following two facts.
First, at each iteration, the objective $P_{R}^{\mathrm{MIMO}}$ is non-decreasing.
Second, the objective function is bounded from above by $P_T\left(\left\Vert\mathbf{H}_{RT}\right\Vert+\left\Vert\mathbf{H}_{RI}\right\Vert\left\Vert\mathbf{H}_{IT}\right\Vert\right)^2$ because of the triangle inequality, and the sub-multiplicativity of the spectral norm.
Note that Alg.~\ref{alg:theta-mimo} can be readily applied to the \gls{miso} setting, as it is a special case of the \gls{mimo} setting, in which $N_R=1$.
Single-user \gls{mimo} systems aided by a BD-RIS working in transmissive mode can be similarly optimized by directly applying the discussion made in Section~\ref{sec:design-transmissive}.

\section{BD-RIS-Aided Multi-User MISO Systems}
\label{sec:mu-mimo}

In this section, we study the weighted sum power maximization in multi-user \gls{miso} systems, which is a problem particularly relevant in \gls{wpt} applications \cite{wu20}.
Let us consider an $N_T$ antenna transmitter serving $K$ single-antenna receivers through the support of a BD-RIS working in reflective mode.
We denote the channel from the transmitter to the $k$th receiver and from the RIS to the $k$th receiver as $\mathbf{h}_{RT,k}\in\mathbb{C}^{1\times N_T}$ and $\mathbf{h}_{RI,k}\in\mathbb{C}^{1\times N_I}$, respectively.
Consequently, the equivalent channel seen by the $k$th receiver is denoted as $\mathbf{h}_k=\mathbf{h}_{RT,k}+\mathbf{h}_{RI,k}\boldsymbol{\Theta}\mathbf{H}_{IT}$.

In general, the transmitted signal writes as
\begin{equation}
\mathbf{x}=\sqrt{P_T}\sum_{k=1}^K \mathbf{w}_ks_k,
\end{equation}
where the precoding vectors $\mathbf{w}_k$ are subject to the constraints $\sum_{k=1}^K\left\Vert\mathbf{w}_k\right\Vert^2=1$ and $s_k$ are the energy-carrying signals subject to $\mathrm{E}[\left|s_k\right|^{2}]=1$.
Thus, the received signal power at the $k$th receiver writes as
\begin{equation}
P_{R,k}=P_T\sum_{j=1}^K\left\vert\mathbf{h}_k\mathbf{w}_j\right\vert^2.\label{eq:SR-k}
\end{equation}
Denoting by $\alpha_k>0$ the power weight of the $k$th receiver, the weighted sum power writes as $S_R=\sum_{k=1}^K\alpha_kP_{R,k}$, with corresponding maximization problem given by
\begin{align}
\underset{\mathbf{w}_k,\boldsymbol{\Theta}}{\mathsf{\mathrm{max}}}\;\;
&\sum_{k=1}^K\alpha_kP_{R,k}\label{eq:SR}\\
\mathsf{\mathrm{s.t.}}\;\;\;
&\boldsymbol{\Theta}=\mathrm{diag}\left(\boldsymbol{\Theta}_{1},\ldots,\boldsymbol{\Theta}_{G}\right),\\
&\boldsymbol{\Theta}_{g}=\boldsymbol{\Theta}_{g}^{T},\:\boldsymbol{\Theta}_{g}^{H}\boldsymbol{\Theta}_{g}=\boldsymbol{\mathrm{I}},\:\forall g,\\
&\sum_{k=1}^K\left\|\mathbf{w}_k\right\|^2=1.
\end{align}
Substituting \eqref{eq:SR-k} into \eqref{eq:SR}, we obtain
\begin{equation}
S_R=\sum_{k=1}^K\alpha_kP_T\sum_{j=1}^K\left\vert\mathbf{h}_k\mathbf{w}_j\right\vert^2=\sum_{j=1}^KP_T\mathbf{w}_j^H\mathbf{S}\mathbf{w}_j,\label{eq:SR2}
\end{equation}
where we introduced $\mathbf{S}=\sum_{k=1}^K\alpha_k\mathbf{h}_k^H\mathbf{h}_k$.
From \eqref{eq:SR2}, we notice that the optimal precoding vectors $\mathbf{w}_j$ should be all aligned with the dominant eigenvector of $\mathbf{S}$, denoted as $\mathbf{v}_{\text{max}}\left(\mathbf{S}\right)$.
As in \cite{wu20}, we consider a single-stream precoding given by $\mathbf{w}=\mathbf{v}_{\text{max}}\left(\mathbf{S}\right)$, with no loss of optimality.
With this optimal precoding, the weighted sum power is
\begin{equation}
S_R=P_T\left\Vert\mathbf{S}\right\Vert=P_T\left\Vert\mathbf{H}^H\mathbf{H}\right\Vert=P_T\left\Vert\mathbf{H}\right\Vert^2,\label{eq:SR(H)}
\end{equation}
where we introduced $\mathbf{H}=\left[\sqrt{\alpha_1}\mathbf{h}_1^H,\ldots,\sqrt{\alpha_K}\mathbf{h}_K^H\right]^H$.
To maximize $S_R$, it is convenient to rewrite \eqref{eq:SR(H)} by explicitly highlighting the role of $\boldsymbol{\Theta}$.
Defining the matrices $\mathbf{G}_{RT}=\left[\sqrt{\alpha_1}\mathbf{h}_{RT,1}^H,\ldots,\sqrt{\alpha_K}\mathbf{h}_{RT,K}^H\right]^H$ and $\mathbf{G}_{RI}=\left[\sqrt{\alpha_1}\mathbf{h}_{RI,1}^H,\ldots,\sqrt{\alpha_K}\mathbf{h}_{RI,K}^H\right]^H$, we can write $\mathbf{H}=\mathbf{G}_{RT}+\mathbf{G}_{RI}\boldsymbol{\Theta}\mathbf{H}_{IT}$.

\subsection{Optimizing Fully Connected RIS-Aided Systems Without Direct Links}

\begin{figure*}[t]
\centering
\includegraphics[width=0.42\textwidth]{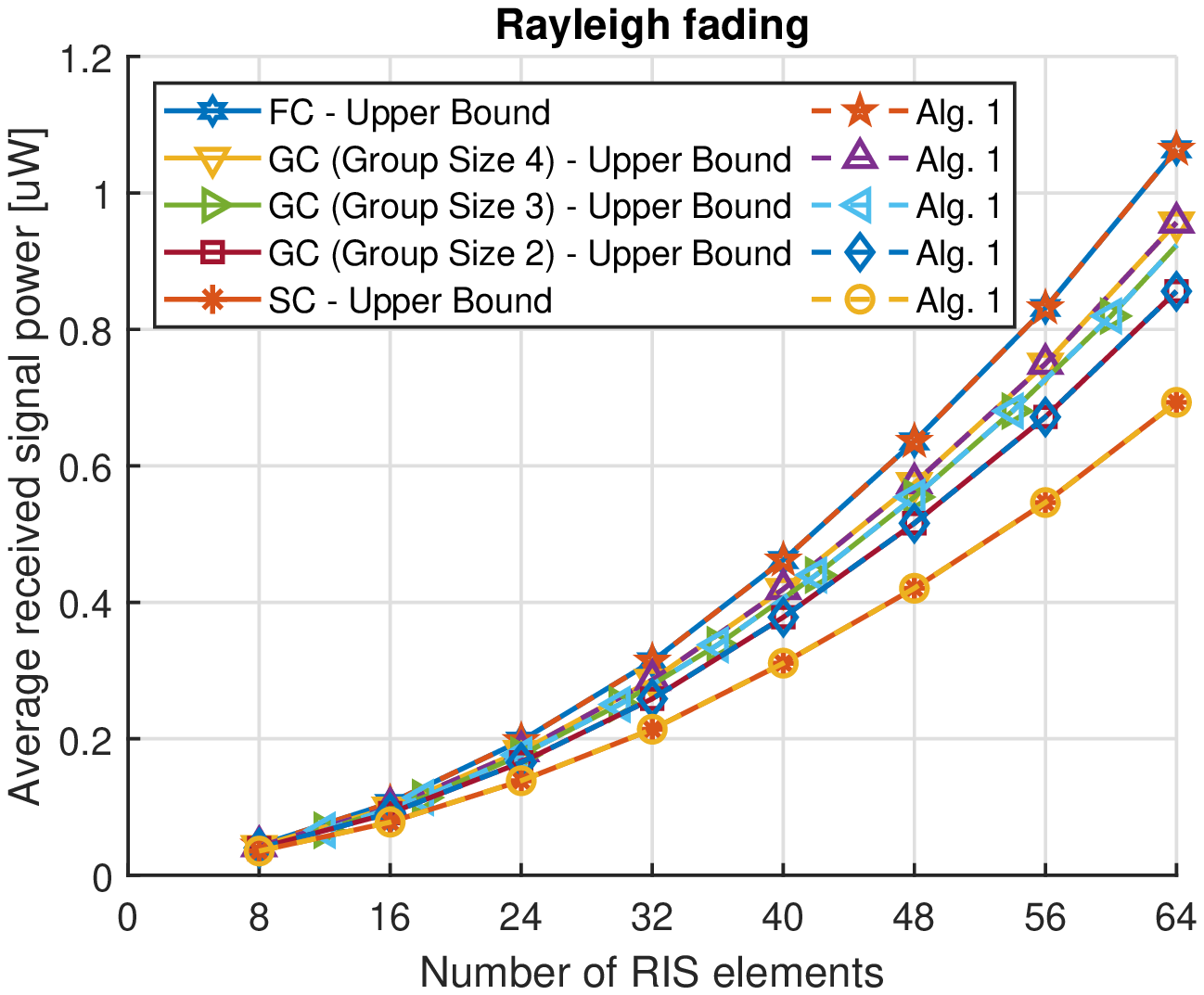}
\includegraphics[width=0.42\textwidth]{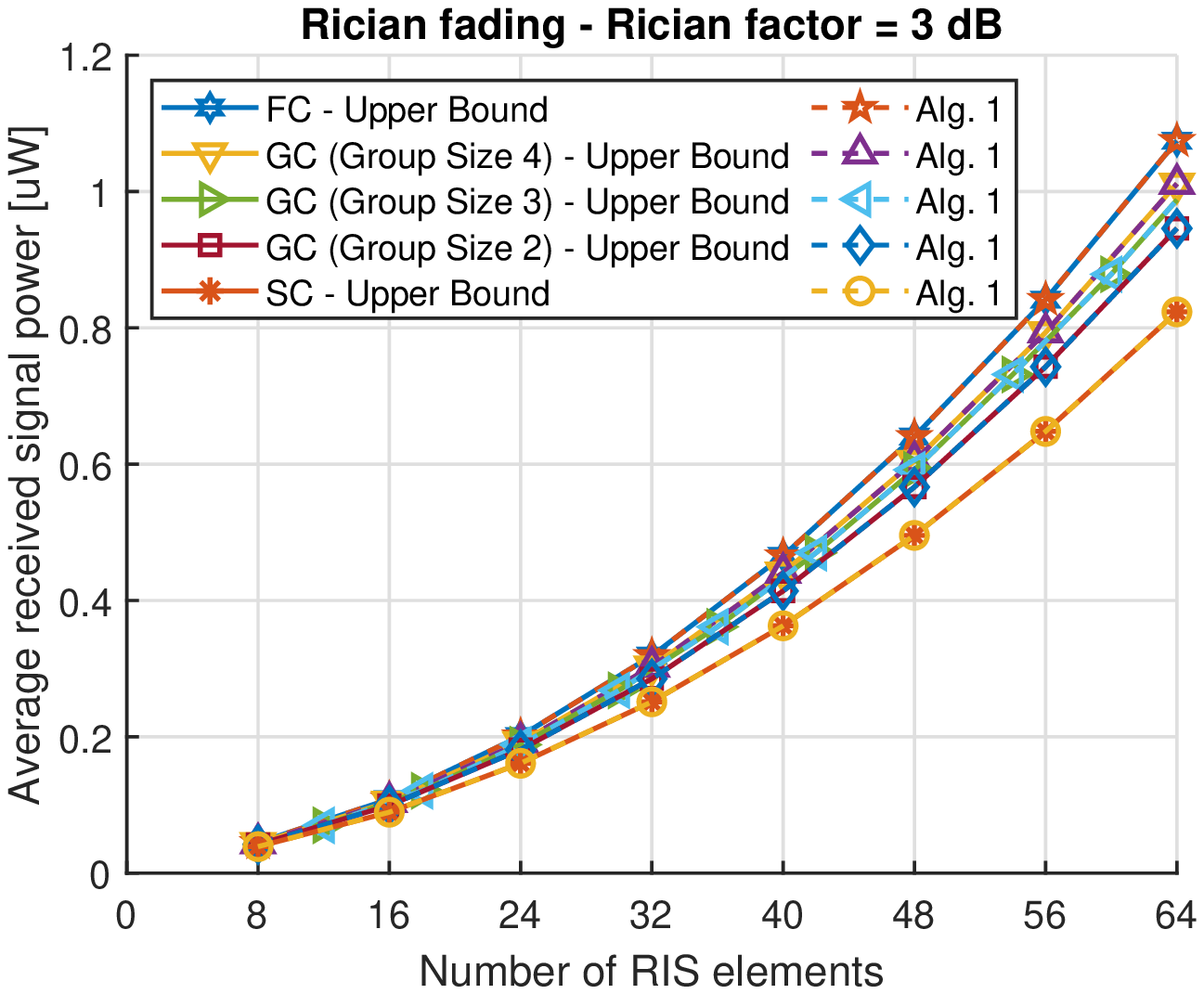}
\caption{Average received signal power in SISO systems aided by single connected ``SC'', group connected ``GC'', and fully connected ``FC'' BD-RISs working in reflective mode.}
\label{fig:pr}
\end{figure*}
\begin{figure*}[t]
\centering
\includegraphics[width=0.42\textwidth]{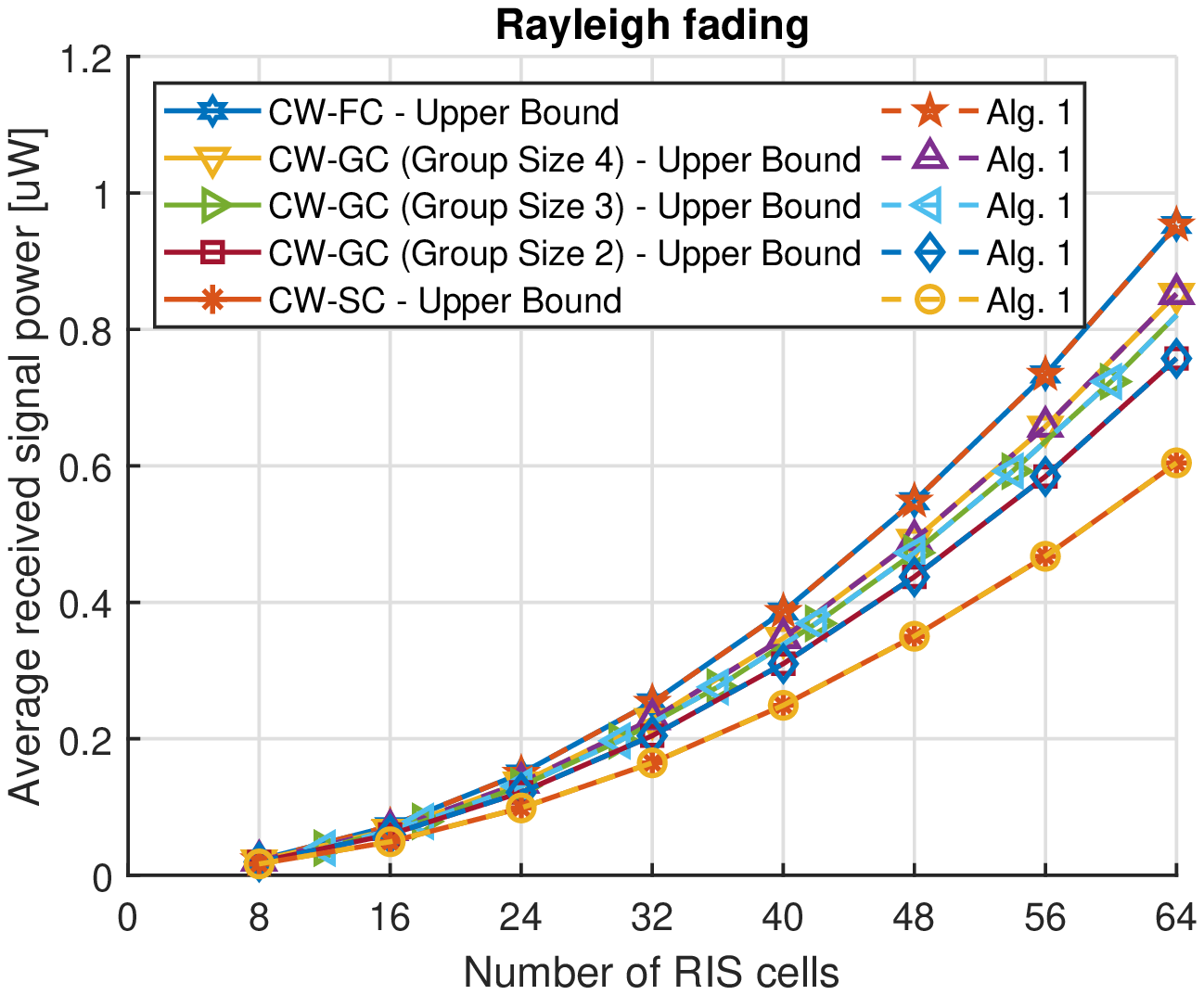}
\includegraphics[width=0.42\textwidth]{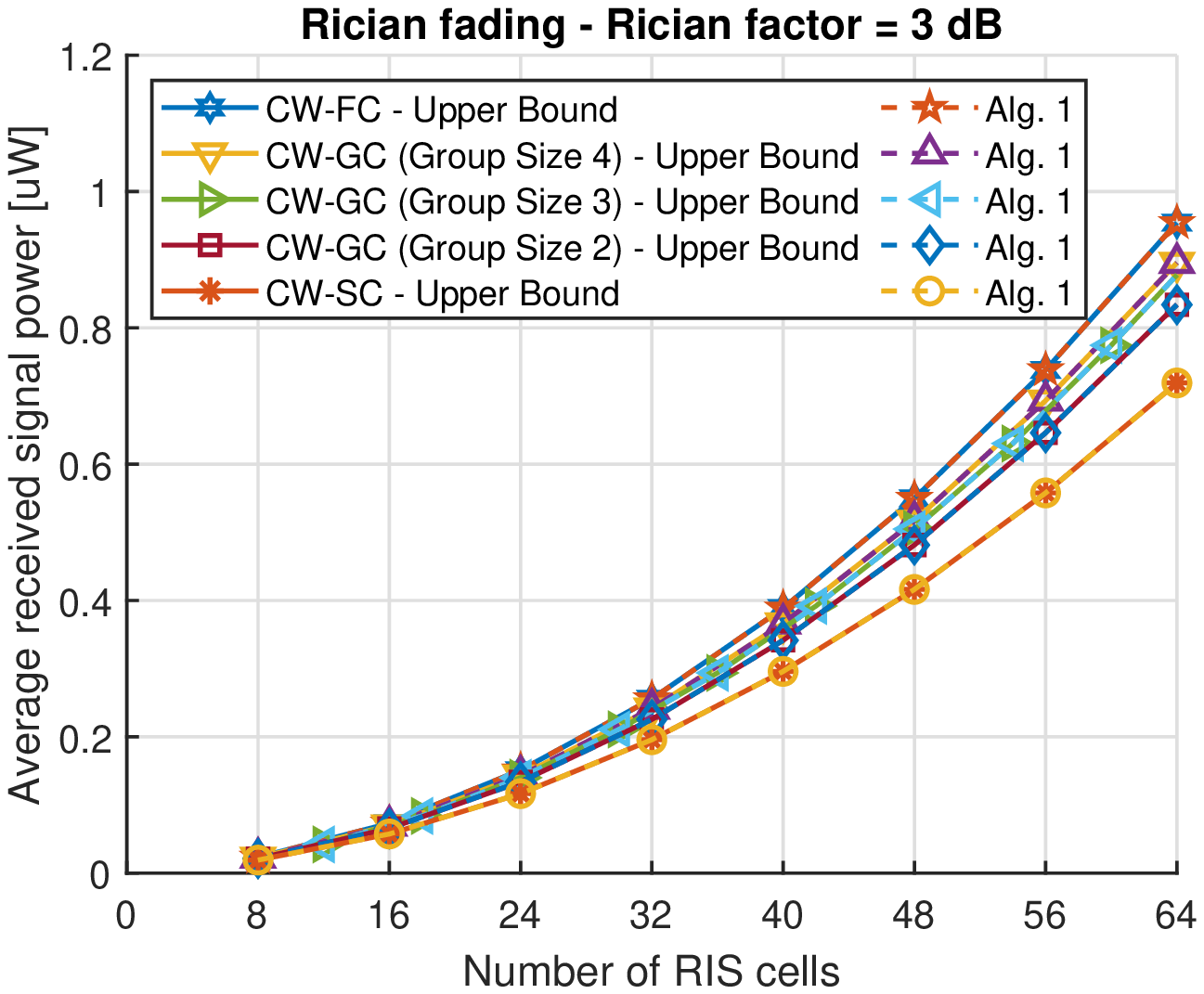}
\caption{Average received signal power in SISO systems aided by cell-wise single connected ``CW-SC'', group connected ``CW-GC'', and fully connected ``CW-FC'' BD-RISs working in transmissive mode.}
\label{fig:pr-transmissive-mode}
\end{figure*}

We first consider a system aided by a fully connected RIS, and we assume that the direct channels between transmitter and receivers are negligible compared to the channels reflected by the RIS.
Consequently, the equivalent channel seen by the $k$th receiver is given by $\mathbf{h}_k=\mathbf{h}_{RI,k}\boldsymbol{\Theta}\mathbf{H}_{IT}$, yielding $\mathbf{H}=\mathbf{G}_{RI}\boldsymbol{\Theta}\mathbf{H}_{IT}$.
Thus, the maximum weighted sum power is given by $P_T\left\Vert\mathbf{G}_{RI}\boldsymbol{\Theta}\mathbf{H}_{IT}\right\Vert^2$, which is upper bounded by
\begin{equation}
\bar{S}_R=P_T\left\Vert\mathbf{G}_{RI}\right\Vert^2\left\Vert\mathbf{H}_{IT}\right\Vert^2,\label{eq:SR-UB}
\end{equation}
because of the sub-multiplicativity of the spectral norm.
Using the discussion carried out for the single-user \gls{mimo} setting, we introduce the vectors $\mathbf{t}_{RI}$ and $\mathbf{u}_{IT}$ as the dominant left singular vectors of $\mathbf{G}_{RI}^{H}$ and $\mathbf{H}_{IT}$, respectively.
Thus, the global optimal $\boldsymbol{\Theta}$ achieving \eqref{eq:SR-UB} can be found by applying Alg.~\ref{alg:theta-siso} to the vectors $\mathbf{t}_{RI}^H$ and $\mathbf{u}_{IT}$.
This proves that the upper bound \eqref{eq:SR-UB} is tight.

\subsection{Optimizing Fully/Group Connected RIS-Aided Systems With Direct Links}

\begin{figure*}[t]
\centering
\includegraphics[width=0.42\textwidth]{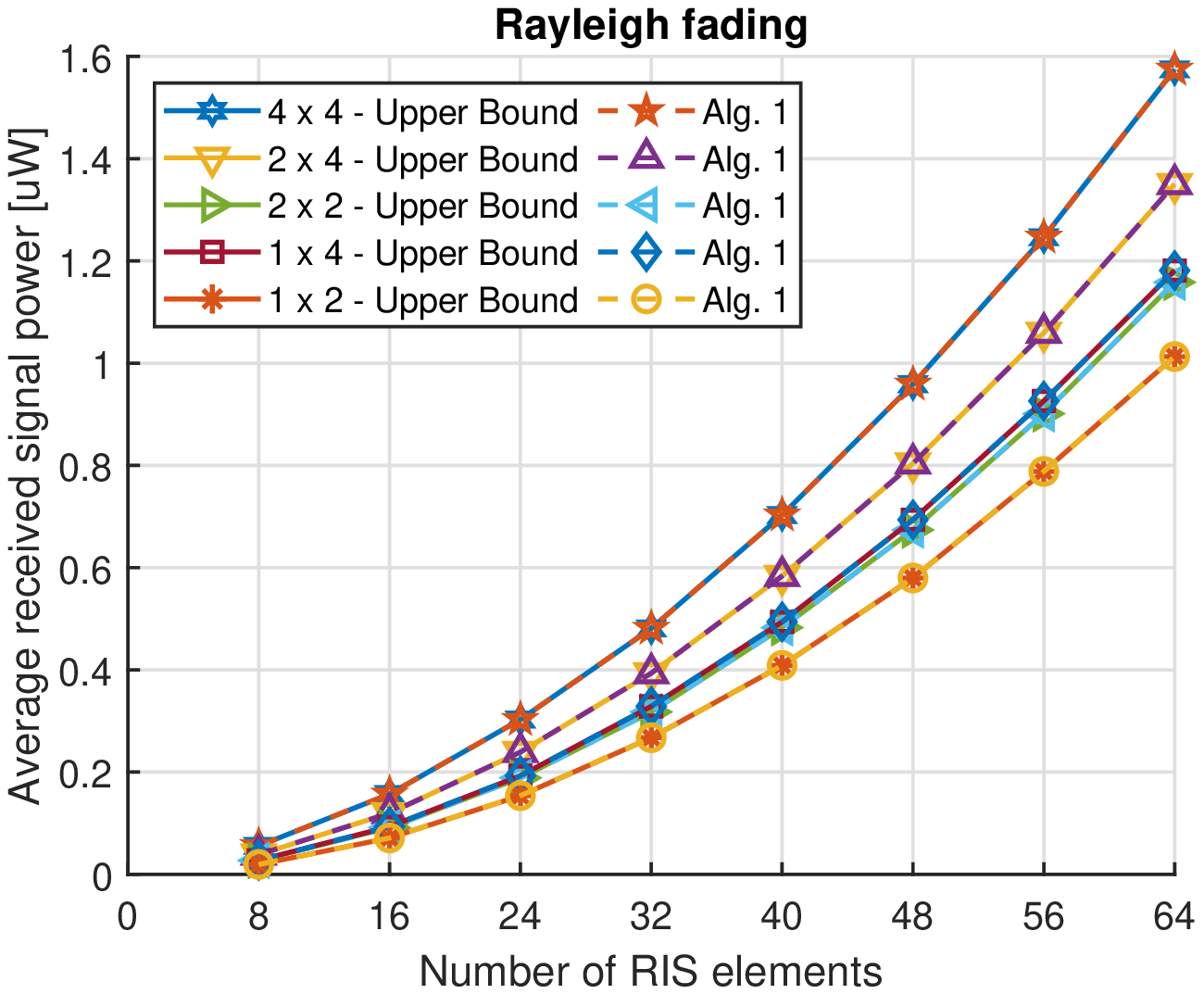}
\includegraphics[width=0.42\textwidth]{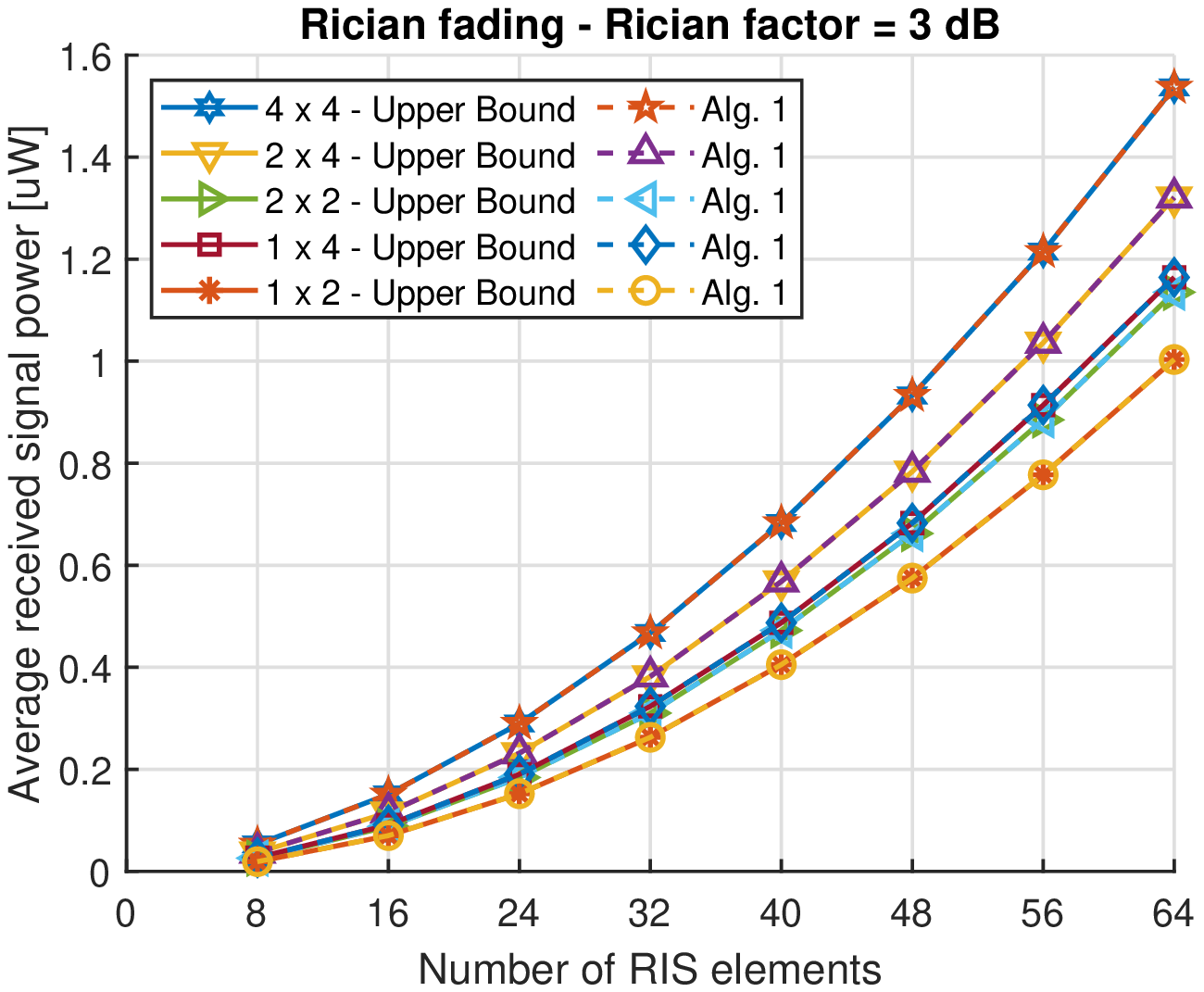}
\caption{Average received signal power in $N_R\times N_T$ systems without direct link aided by fully connected RISs working in reflective mode.}
\label{fig:pr-MIMO}
\end{figure*}
\begin{figure*}[t]
\centering
\includegraphics[width=0.42\textwidth]{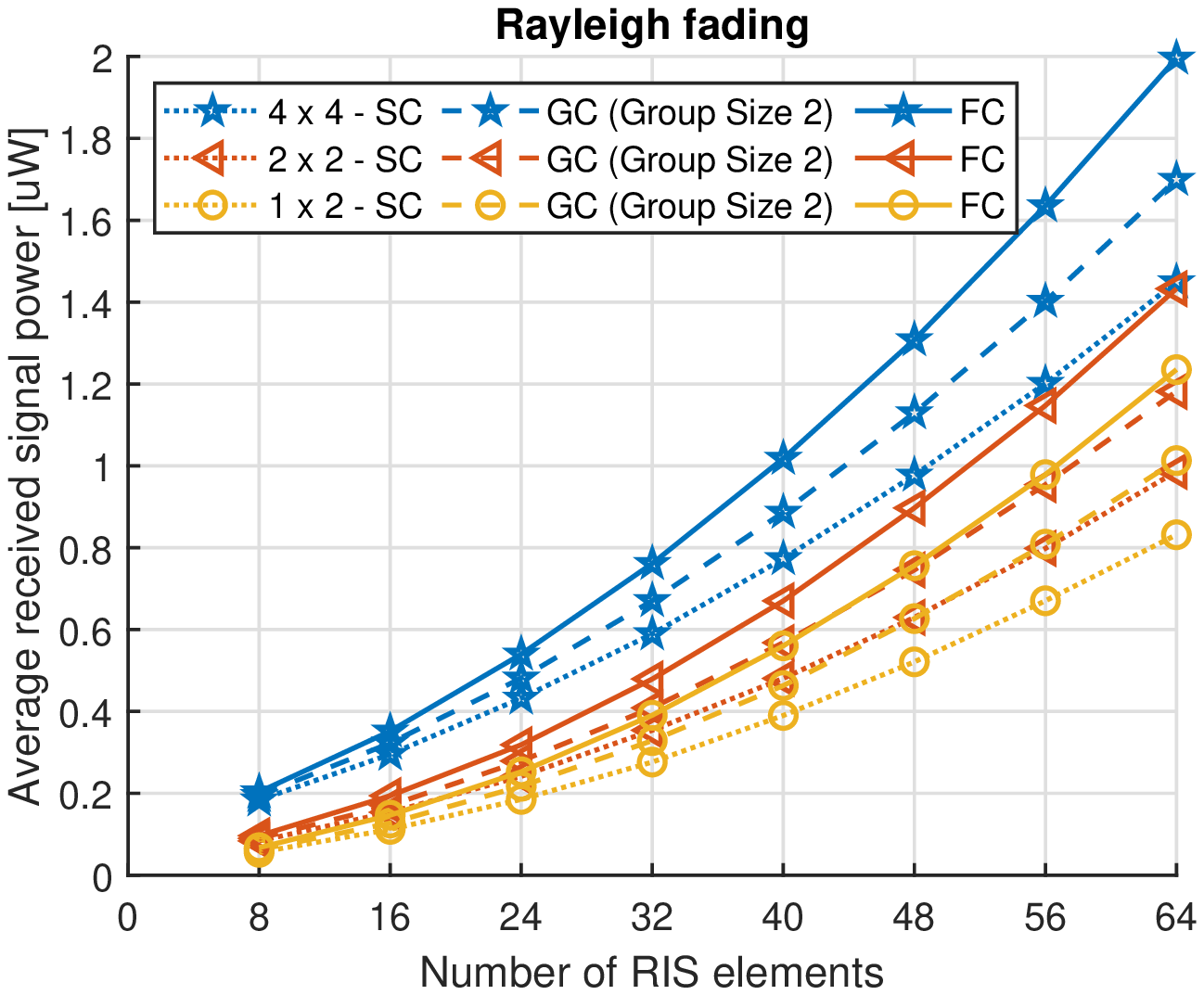}
\includegraphics[width=0.42\textwidth]{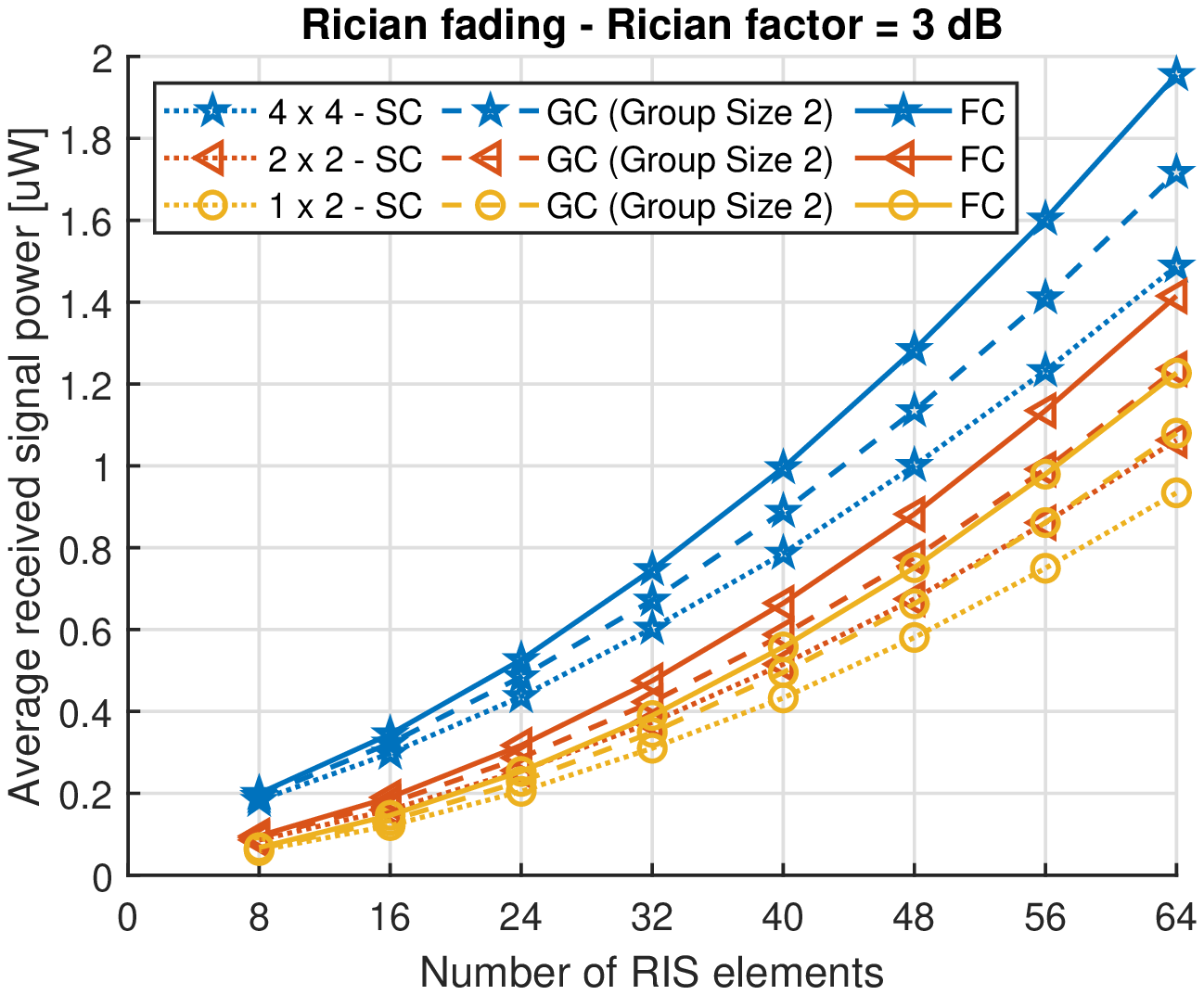}
\caption{Average received signal power in $N_R\times N_T$ systems aided by single connected ``SC'', group connected ``GC'', and fully connected ``FC'' BD-RISs working in reflective mode and optimized through Alg.~\ref{alg:theta-mimo}.}
\label{fig:pr-MIMO-HRT}
\end{figure*}

For general systems in which group connected RISs are considered or the direct links are not negligible, tight performance upper bounds are not available.
In this case, we notice that maximizing \eqref{eq:SR(H)} is equivalent to maximizing
\begin{equation}
S_R=P_{T}\left|\mathbf{z}_1\left(\mathbf{G}_{RT}+\mathbf{G}_{RI}\boldsymbol{\Theta}\mathbf{H}_{IT}\right)\mathbf{z}_2\right|^{2},\label{eq:SR(T)}
\end{equation}
where $\mathbf{z}_1\in\mathbb{C}^{1\times N_{R}}$ and $\mathbf{z}_2\in\mathbb{C}^{N_{T}\times1}$ are auxiliary variables such that $\left\|\mathbf{z}_1\right\|=1$ and $\left\|\mathbf{z}_2\right\|=1$.
Furthermore, maximizing \eqref{eq:SR(T)} is similar to the problem of maximizing the received signal power in \eqref{eq:PR-MIMO}, solved through Alg.~\ref{alg:theta-mimo} in Section~\ref{subsec:su-mimo-general}.
Thus, our sub-optimal strategy provided by Alg.~\ref{alg:theta-mimo} can be readily applied to solve also this maximization problem.
Multi-user \gls{miso} systems aided by a BD-RIS working in hybrid mode can be similarly optimized by directly applying the discussion made in Section~\ref{sec:design-transmissive}.

\section{Numerical Results}
\label{sec:results}

Let us consider a two-dimensional coordinate system, in which the $y$-axis represents the height above the ground in meters (m).
The transmitter and the receiver are located at $(0,0)$ and $(52,0)$, respectively.
The RIS is located at $(50,2)$ and is equipped with $N_I$ antennas.
Note that the simulation setting is similar to the setting adopted in \cite{she20}, with the RIS close to the receiver to maximize the gain brought by the RIS.
Nevertheless, the conclusions drawn in this study are not impacted by the position of the RIS.
The distance-dependent path loss is modeled as $L_{ij}(d_{ij})=L_{0}d_{ij}^{-\alpha_{ij}}$, where $L_{0}$ is the reference path loss at distance 1 m, $d_{ij}$ is the distance, and $\alpha_{ij}$ is the path loss exponent for $ij\in\{RT,RI,IT\}$.
We set $L_{0}=-30$ dB, $\alpha_{RT}=3.5$, $\alpha_{RI}=2.8$, $\alpha_{IT}=2$, and $P_{T}=10$ W.
For the small-scale fading, the channels are modeled with both Rayleigh and Rician fading, given by
\begin{equation}
\mathbf{h}_{ij}=\sqrt{L_{ij}}\left(\sqrt{\frac{K_F}{1+K_F}}\mathbf{h}_{ij}^{\mathrm{LoS}}+\sqrt{\frac{1}{1+K_F}}\mathbf{h}_{ij}^{\mathrm{NLoS}}\right),\label{eq:Rician Channel Model}
\end{equation}
where $K_F$ refers to the Rician factor, while $\mathbf{h}_{ij}^{\mathrm{LoS}}$ and $\mathbf{h}_{ij}^{\mathrm{NLoS}}\sim\mathcal{CN}\left(\boldsymbol{0},\mathbf{I}\right)$ represent the small-scale \gls{los} and \gls{nlos} (Rayleigh fading) components, respectively, for $ij\in\{RT,RI,IT\}$.
To model Rician fading channels, we consider $K_F=3$ dB.

\subsection{RIS-Aided Single-User SISO Systems: Reflective Mode}

We start by analyzing the performance of \gls{siso} systems aided by single, group, and fully connected RISs working in reflective mode.
In Fig.~\ref{fig:pr}, we report the average received signal power given in \eqref{eq:PR} obtained by optimizing the scattering matrix $\boldsymbol{\Theta}$ through our optimal strategy proposed in Section~\ref{subsec:group-connected}, for different group sizes.
We compare these results with the average received signal power upper bound given by
\begin{equation}
\bar{P}_{R}^{\mathrm{Group}}=P_{T}\left(\left\vert h_{RT}\right\vert+\sum_{g=1}^{G}\left\Vert \mathbf{h}_{RI,g}\right\Vert \left\Vert \mathbf{h}_{IT,g}\right\Vert \right)^{2},\label{eq:PR-UB}
\end{equation}
as derived in \cite{she20}.
The upper bound \eqref{eq:PR-UB}, valid for group connected RISs, boils down to
\begin{equation}
\bar{P}_{R}^{\mathrm{Single}}=P_{T}\left(\left\vert h_{RT}\right\vert+\sum_{n_{I}=1}^{N_{I}}\left|\left[\mathbf{h}_{RI}\right]_{n_{I}}\left[\mathbf{h}_{IT}\right]_{n_{I}}\right|\right)^2\label{eq:PR-UB-SC}
\end{equation}
for single connected RISs and to
\begin{equation}
\bar{P}_{R}^{\mathrm{Fully}}=P_{T}\left(\left\vert h_{RT}\right\vert+\left\Vert\mathbf{h}_{RI}\right\Vert\left\Vert \mathbf{h}_{IT}\right\Vert\right)^2\label{eq:PR-UB-FC}
\end{equation}
for fully connected RISs\footnote{The upper bounds \eqref{eq:PR-UB-SC} and \eqref{eq:PR-UB-FC} scale as $\mathcal{O}(\Gamma(1.5)^4N_I^2)$ and $\mathcal{O}(N_I^2)$, respectively, where $\Gamma(\cdot)$ refers to the gamma function, in the presence of Rayleigh fading channels \cite{she20}.
Thus, fully connected RISs provide a power gain of $1/\Gamma(1.5)^{4}=16/\pi^2$ over the single connected RISs when $N_I\to\infty$.
}.
As expected, we observe that the upper bounds are exactly achieved by our closed-form solution.
The upper bounds \eqref{eq:PR-UB}, \eqref{eq:PR-UB-SC}, and \eqref{eq:PR-UB-FC} can be numerically achieved also by the quasi-Newton method used in \cite{she20}.
However, the convergence of the quasi-Newton method to a global optimum cannot be mathematically proved.
Fully connected RISs achieve the same performance in both Rayleigh and Rician fading conditions.
Besides, single and group connected RISs benefit from the \gls{los} component, achieving higher power with Rician fading, agreeing with \cite{she20}\footnote{Considering correlated Rayleigh fading distributed according to the exponential correlation model with coefficient $\rho=0.5$, the performance of single and fully connected RISs remains unchanged compared to i.i.d. Rayleigh fading.
Besides, the performance of group connected RISs slightly reduces.}.
Remarkably, the \gls{siso} system without RIS can only achieve an average received signal power of $9.88$ nW.

\begin{figure*}[t]
\centering
\includegraphics[width=0.42\textwidth]{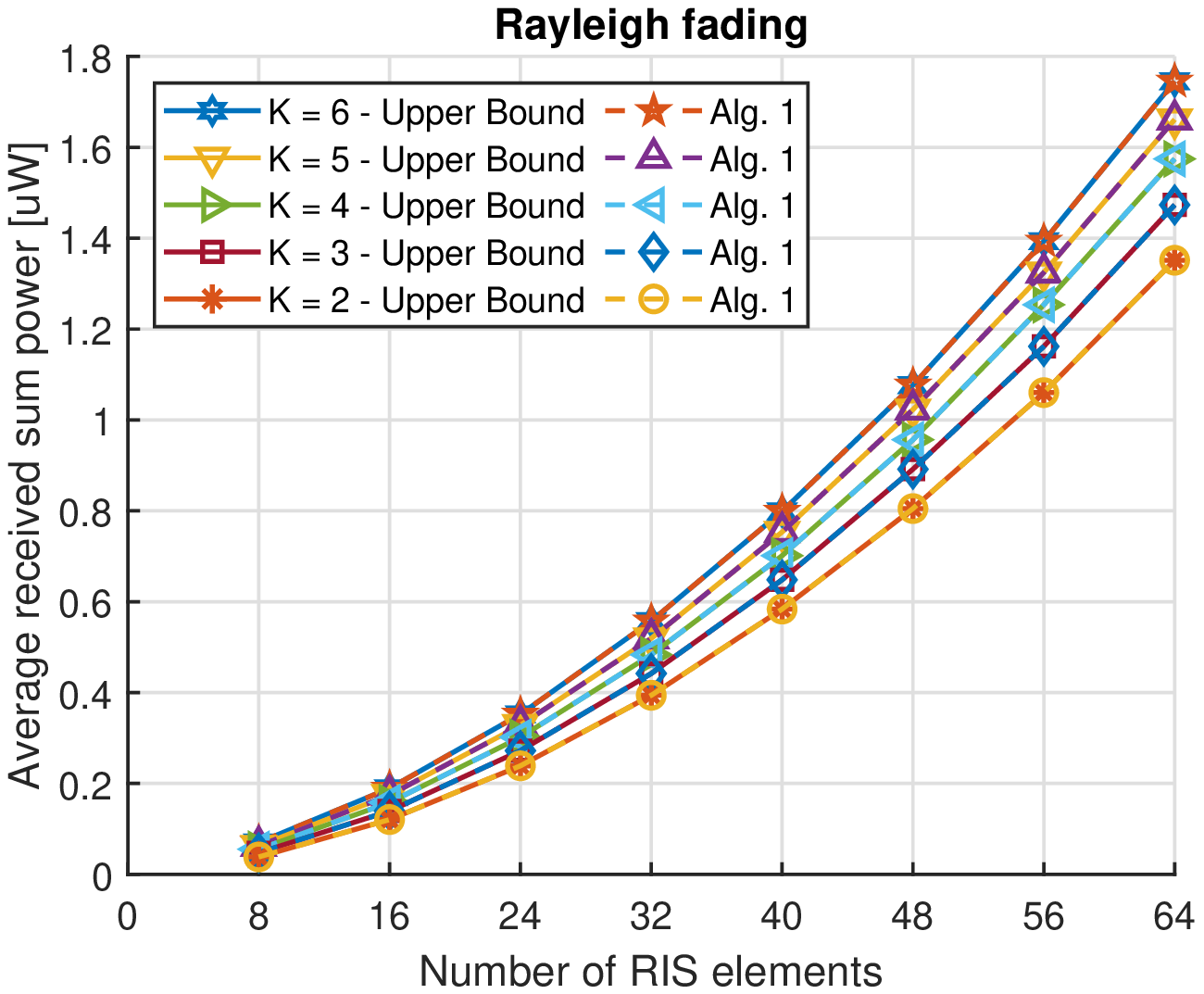}
\includegraphics[width=0.42\textwidth]{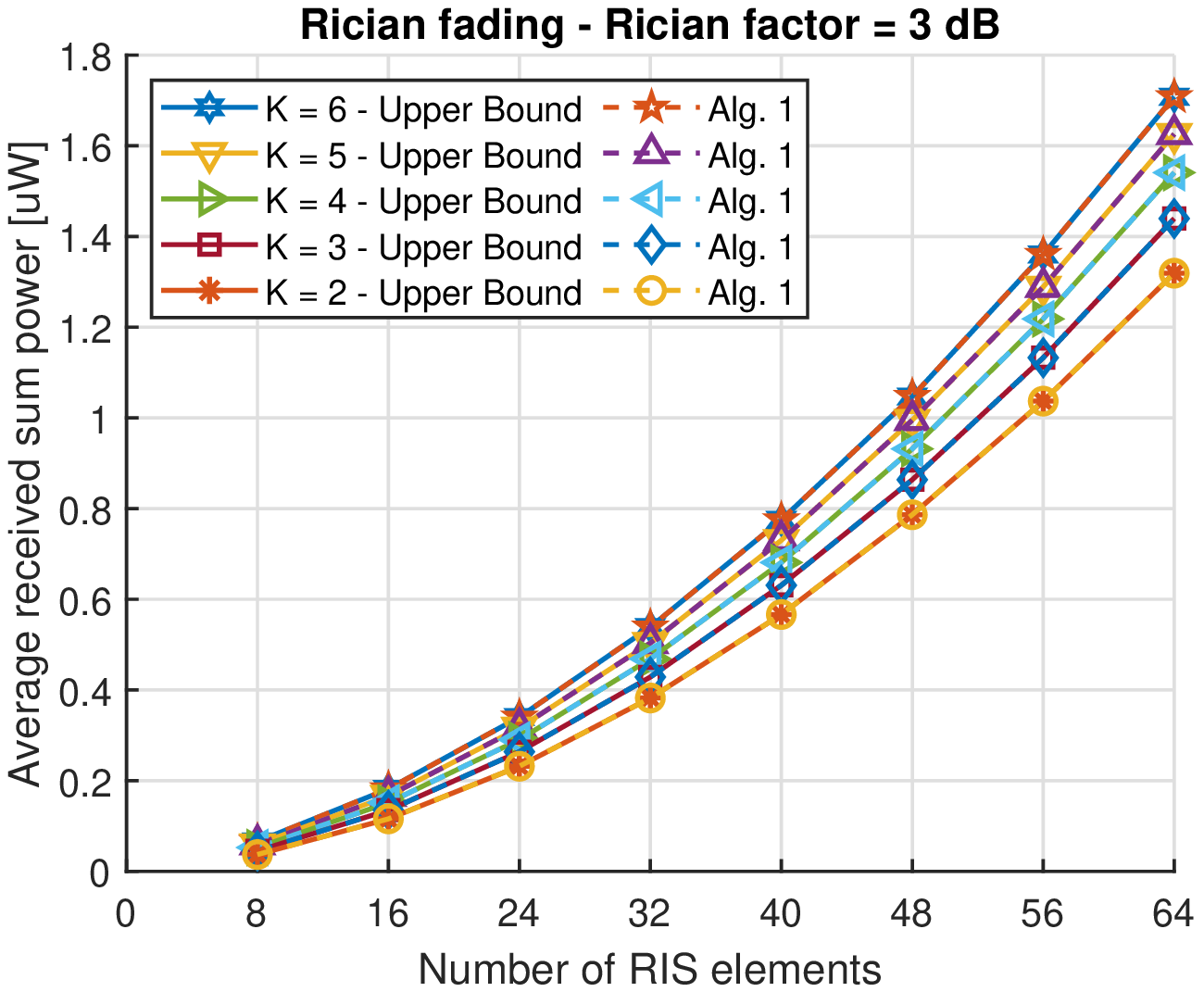}
\caption{Average received sum power in multi-user MISO systems with $N_T=4$ without direct link aided by fully connected RISs working in reflective mode.}
\label{fig:pr-MU-MIMO}
\end{figure*}
\begin{figure*}[t]
\centering
\includegraphics[width=0.42\textwidth]{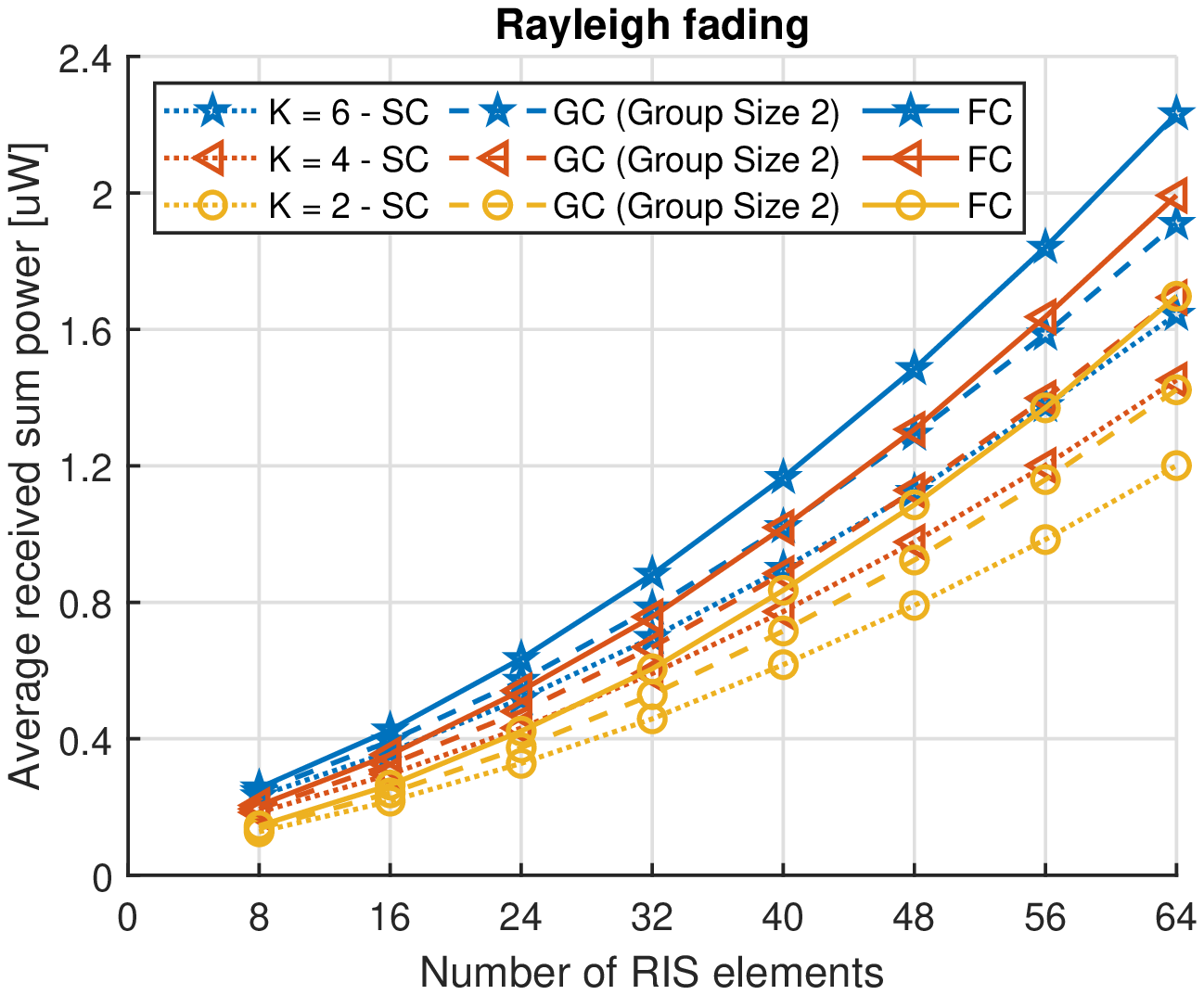}
\includegraphics[width=0.42\textwidth]{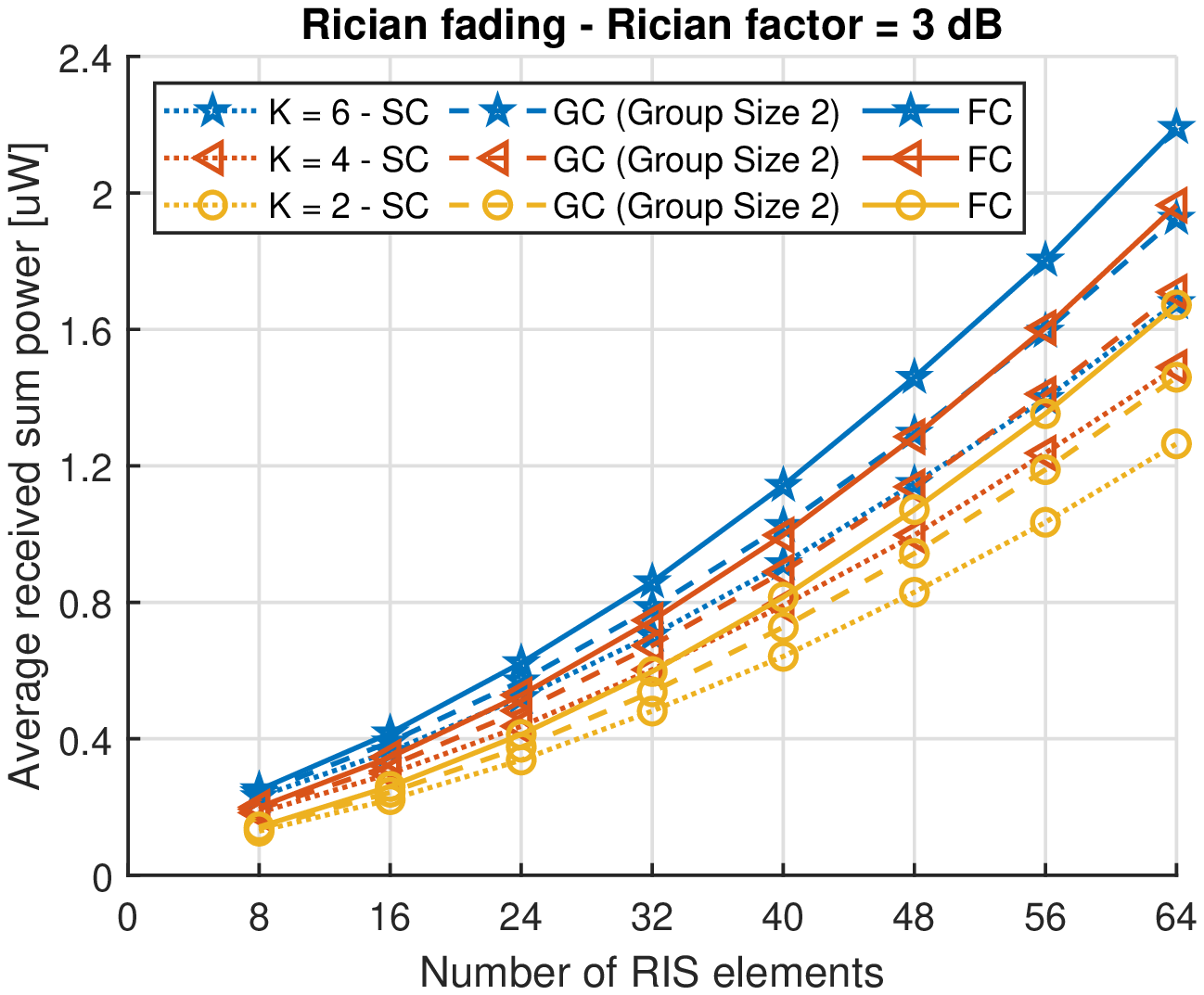}
\caption{Average received sum power in multi-user MISO systems with $N_T=4$ aided by single connected ``SC'', group connected ``GC'', and fully connected ``FC'' BD-RISs working in reflective mode and optimized through Alg.~\ref{alg:theta-mimo}.}
\label{fig:pr-MU-MIMO-HRT}
\end{figure*}

\subsection{RIS-Aided Single-User SISO Systems: Transmissive Mode}

We now consider \gls{siso} systems aided by a BD-RIS working in transmissive mode.
The receiver is now located in $(52,4)$ and we set $\alpha_{RT}=4$ to model a weak direct link.
As previously discussed, we assume the transmitter to be in sector 1 and the receiver in sector 2.
Thus, the odd entries of the channel $\mathbf{h}_{RI}$ and the even entries of the channel $\mathbf{h}_{IT}$ are forced to zero.
In Fig.~\ref{fig:pr-transmissive-mode}, we report the received signal power achieved by the optimal design strategy proposed in Section~\ref{subsec:group-connected} and its upper bounds.
We observe that the received signal power upper bounds are exactly achieved by our solution involving Alg.~\ref{alg:theta-siso}.
Thus, our optimal design strategy can be successfully applied to BD-RIS working in both reflective and transmissive modes.

\subsection{RIS-Aided Single-User MIMO Systems}

We analyze the performance of RIS-aided single-user \gls{mimo} and \gls{miso} systems, in which the RIS works in reflective mode.
In Fig.~\ref{fig:pr-MIMO}, we report the received signal power \eqref{eq:PR-MIMO} for $N_R\times N_T$ systems aided by a fully connected RIS, and with negligible direct link.
The received signal power obtained by the exact solution provided by Alg.~\ref{alg:theta-siso} is compared with its upper bound \eqref{eq:PR-MIMO-UB}.
We observe that the performance upper bounds are exactly achieved by our solution.
Furthermore, higher performance is obtained by increasing the number of antennas $N_R$ and $N_T$.
Rayleigh fading channels allow reaching a slightly higher received signal power since they offer richer scattering.
%
In Fig.~\ref{fig:pr-MIMO-HRT}, we consider single-user \gls{mimo} and \gls{miso} systems aided by a single, group, or fully connected RIS.
Alg.~\ref{alg:theta-mimo} is used to maximize the received signal power \eqref{eq:PR-MIMO} by optimizing the RIS scattering matrix.
As expected, fully connected RISs achieve higher received signal power than group connected RISs, which in turn outperform single connected RISs.
The performance gap between fully connected and single connected architectures is slightly higher in Rayleigh fading conditions.

\subsection{RIS-Aided Multi-User MISO Systems}

We now consider the weighted sum power maximization problem in RIS-aided multi-user \gls{miso} systems, with the RIS working in reflective mode.
In our simulations, all $K$ receivers are placed in $(52,0)$, and we set $N_T=4$ and $\alpha_k=1$ $\forall k$.
In Fig.~\ref{fig:pr-MU-MIMO}, we consider multi-user \gls{miso} systems aided by a fully connected RIS, and with negligible direct links between transmitter and receivers.
The received sum power \eqref{eq:SR} is maximized by applying the optimal precoding $\mathbf{w}=\mathbf{v}_{\text{max}}\left(\mathbf{S}\right)$, and by designing $\boldsymbol{\Theta}$ through Alg.~\ref{alg:theta-siso}.
We compare this received sum power with its upper bound \eqref{eq:SR-UB}.
As expected, the solution offered by Alg.~\ref{alg:theta-siso} is optimal as it exactly achieves the performance upper bounds.
%
In Fig.~\ref{fig:pr-MU-MIMO-HRT}, we report the weighted sum power of multi-user \gls{miso} systems aided by a single, group, or fully connected RIS, as given in \eqref{eq:SR}.
In these systems, Alg.~\ref{alg:theta-mimo} is used to optimize the RIS scattering matrix.
Fully connected RISs achieve the highest performance over group and single connected BD-RISs, while single connected RISs obtain the lowest weighted sum power.
Besides, the weighted sum power increases with the number of receivers $K$ because of the higher diversity offered.

\subsection{Computational Complexity}

Finally, we assess the computational complexity of our optimal design strategy.
In fully connected architectures, the complexity growth of Alg.~\ref{alg:theta-siso} as a function of $N_I$ is given by the complexity of eigenvalue decomposition, that is $\mathcal{O}(N_I^3)$.
This is less than the complexity of the quasi-Newton optimization adopted in previous literature, which is $\mathcal{O}(N_I^2(N_I+1)^2/4)$ for each iteration \cite{she20}.
In group connected architectures with group size $N_G$, the block diagonal scattering matrix is designed by running $G=N_I/N_G$ times Alg.~\ref{alg:theta-siso}, with complexity $\mathcal{O}(N_G^3)$.
Thus, the complexity of our solution is $\mathcal{O}(N_G^2N_I)$ in this case, less than the complexity of quasi-Newton optimization, given by $\mathcal{O}(N_I^2(N_G+1)^2/4)$ for each iteration \cite{she20}.
In Fig.~\ref{fig:complexity}, the computational complexity of Alg.~\ref{alg:theta-siso} is compared with the complexity of the quasi-Newton method used in \cite{she20}.

\begin{figure}[t]
\centering
\includegraphics[width=0.42\textwidth]{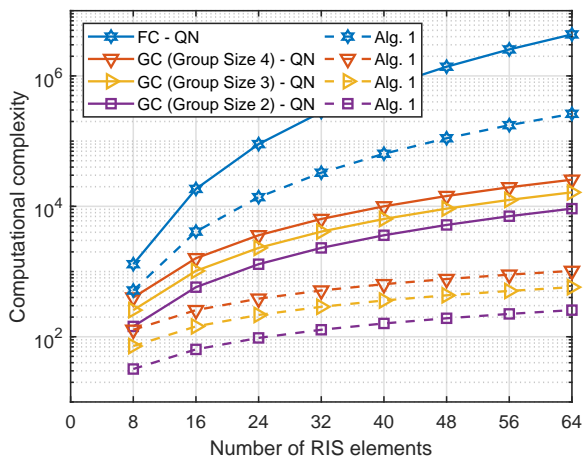}
\caption{Computational complexity versus the number of RIS elements. Alg.~\ref{alg:theta-siso} is compared with the quasi-Newton method ``QN'' for group connected ``GC'' and fully connected ``FC'' BD-RISs.}
\label{fig:complexity}
\end{figure}

\section{Conclusion}
\label{sec:conclusion}

We provide a low-complexity closed-form solution to design the global optimal scattering matrix in the case of group and fully connected RISs.
The resulting scattering matrix is proved to achieve exactly the received signal power upper bounds derived in \cite{she20}.
Our solution is upper bound-achieving for any channel realization since we do not pose assumptions on the channel distribution.
We first present a closed-form global optimal design strategy for RIS-aided single-user \gls{siso} systems.
Subsequently, our strategy is extended to single-user \gls{mimo} and multi-user \gls{miso} systems.
For systems aided by fully connected RISs and with negligible direct links, we provide tight performance upper bounds.
We show that such upper bounds can be exactly achieved with our optimal strategy.
Finally, we show that our algorithm is less complex than the iterative optimization methods applied to design BD-RISs in recent literature.
The complexity of our algorithm grows linearly (resp. cubically) with the number of RIS elements in the case of group (resp. fully) connected RISs.

Our optimal design strategy is expected to play a role in the solution of two problems related to the design of group and fully connected RISs.
Firstly, a possible research direction is to consider our strategy to further improve the design of discrete-value group and fully connected RISs.
Secondly, the optimal scattering matrix properties highlighted by our strategy could be exploited to enable efficient channel estimation for group and fully connected RISs.

\section*{Appendix}

\subsection{Proof of Proposition~\ref{pro:NI23}}
\label{subsec:proof1}

The matrix $\mathbf{A}$ is a linear combination of four outer products matrices, i.e.,
\begin{equation}
\mathbf{A}=\frac{1}{2}\mathbf{R}_{RI}+\frac{1}{2}\mathbf{R}_{RI}^T-\frac{1}{2}\mathbf{R}_{IT}-\frac{1}{2}\mathbf{R}_{IT}^T=\mathbf{B}\mathbf{B}^H,\label{eq:lin-comb}
\end{equation}
where the matrix $\mathbf{B}\in\mathbb{C}^{N_I\times 4}$ is introduced as $\mathbf{B}=1/\sqrt{2}[\hat{\mathbf{h}}_{RI}^H,\hat{\mathbf{h}}_{RI}^T,j\hat{\mathbf{h}}_{IT},j\hat{\mathbf{h}}_{IT}^*]$.
Since $\mathbf{B}$ is full rank, we have $r\left(\mathbf{A}\right)=r\left(\mathbf{B}\right)=\min\{4,N_I\}$.
Note that we assumed $\mathbf{h}_{RI}$ and $\mathbf{h}_{IT}$ to be independent in this discussion since the linearly dependent case has been trivially addressed.
Thus, it holds $r\left(\mathbf{A}\right)=N_I$ if $N_I\in\left\{2,3\right\}$.
The trace of $\mathbf{A}$ can be readily computed from \eqref{eq:lin-comb} by observing that $\text{Tr}\left(\mathbf{R}_{RI}\right)=1$ and $\text{Tr}\left(\mathbf{R}_{IT}\right)=1$ since $\Vert\hat{\mathbf{h}}_{RI}\Vert=1$ and $\Vert\hat{\mathbf{h}}_{IT}\Vert=1$.
By applying the trace linearity property, we have $\text{Tr}\left(\mathbf{A}\right)=0$.

\subsection{Proof of Proposition~\ref{pro:NI4}}
\label{subsec:proof2}

To prove that $r\left(\mathbf{A}\right)=4$ and $\text{Tr}\left(\mathbf{A}\right)=0$, we can directly apply the Proof of Proposition~\ref{pro:NI23}.
To prove that $\mathbf{A}$ has two positive and two negative eigenvalues, we use the fact that $\mathbf{A}=\mathbf{A}_{RI}-\mathbf{A}_{IT}$ is given by the sum of two symmetric matrices.
This proof is carried out by treating differently the cases $N_I=4$ and $N_I>4$.

In the case $N_I=4$, $\mathbf{A}$ is a full rank matrix, i.e., $\text{det}\left(\mathbf{A}\right)\neq0$.
We denote the decreasingly ordered eigenvalues of $\mathbf{A}_{RI}$ as $\delta_{RI,1},\ldots,\delta_{RI,N_I}$ and the decreasingly ordered eigenvalues of $\mathbf{A}_{IT}$ as $\delta_{IT,1},\ldots,\delta_{IT,N_I}$.
According to \cite{fie71}, $\text{det}\left(\mathbf{A}\right)$ can be lower bounded by
\begin{equation}
\min_P\prod_{n_I=1}^{N_I}\left(\delta_{RI,n_I}-\delta_{IT,Pn_I}\right)\leq\text{det}\left(\mathbf{A}\right),
\end{equation}
where the minimum is taken over all permutations of indices $1,\ldots,N_I$.
Since $\mathbf{A}_{RI}$ and $\mathbf{A}_{IT}$ are rank-2, it is always possible to find a permutation $P$ such that $\min_P\prod_{n_I=1}^{N_I}\left(\delta_{RI,n_I}-\delta_{IT,Pn_I}\right)=0$.
Thus, recalling that $\text{det}\left(\mathbf{A}\right)\neq0$, we obtain $\text{det}\left(\mathbf{A}\right)>0$.
Since $\mathbf{A}$ has four non-zero eigenvalues and $\text{Tr}\left(\mathbf{A}\right)=0$, $\text{det}\left(\mathbf{A}\right)>0$ implies the presence of two positive and two negative eigenvalues. 
This concludes the proof for $N_I=4$.

In the case $N_I>4$, we begin by noticing that $\mathbf{A}_{RI}$ and $\mathbf{A}_{IT}$ have two non-zero eigenvalues, both positives.
The reason is that the matrices are rank-2 by construction and positive semi-definite since both are the sum of two positive semi-definite matrices.
This means that $\delta_{RI,n_I},\delta_{IT,n_I}=0$ if $n_I>2$.
Furthermore, since $\mathbf{A}_{RI}$ and $\mathbf{A}_{IT}$ are Hermitian, we can apply Weyl's inequalities to study the eigenvalues of $\mathbf{A}$.
According to Weyl's inequality, we have
\begin{equation}
\delta_{i}\leq\delta_{RI,i-j}-\delta_{IT,N_I-j},\label{eq:weyl}
\end{equation}
valid for $i=1,\ldots,N_I$ and $j=0,\ldots,i-1$ \cite{hor12}.
Considering \eqref{eq:weyl} with $i=3$ and $j=0$, we have 
\begin{equation}
\delta_{3}\leq\delta_{RI,3}-\delta_{IT,N_I}=0,
\end{equation}
since $\delta_{RI,3}=0$ and $\delta_{IT,N_I}=0$.
Additionally, the dual Weyl's inequality gives
\begin{equation}
\delta_{RI,i+k-1}-\delta_{IT,k}\leq\delta_{i},\label{eq:weyl-dual}
\end{equation}
valid for $i=1,\ldots,N_I$ and $k=1,\ldots,N_I-i+1$ \cite{hor12}.
Considering \eqref{eq:weyl-dual} with $i=3$ and $k=3$, we have
\begin{equation}
0=\delta_{RI,5}+\delta_{IT,3}\leq\delta_{3},
\end{equation}
since $\delta_{RI,5}=0$ and $\delta_{IT,3}=0$, yielding $\delta_{3}=0$.
Now, we consider \eqref{eq:weyl} with $i=N_I-2$ and $j=0$ to obtain
\begin{equation}
\delta_{N_I-2}\leq\delta_{RI,N_I-2}-\delta_{IT,N_I}=0,
\end{equation}
since $\delta_{RI,N_I-2}=0$ and $\delta_{IT,N_I}=0$.
Additionally, \eqref{eq:weyl-dual} with $i=N_I-2$ and $k=3$ gives
\begin{equation}
0=\delta_{RI,N_I}+\delta_{IT,3}\leq\delta_{N_I-2},
\end{equation}
since $\delta_{RI,N_I}=0$ and $\delta_{IT,3}=0$, yielding $\delta_{N_I-2}=0$.
Since $\delta_{3}=0$ and $\delta_{N_I-2}=0$, it has to be $\delta_{1},\delta_{2}>0$ and $\delta_{N_I-1},\delta_{N_I}<0$.
This concludes the proof for $N_I>4$.

\section*{Acknowledgment}
The authors would like to thank Matheus Manzatto de Castro of Imperial College London for the very fruitful discussions on this paper.

\bibliographystyle{IEEEtran}
\bibliography{IEEEabrv,main}

\end{document}